\documentclass[journal]{IEEEtran}
\usepackage{amsmath,amssymb,amsfonts}
\usepackage{amsthm}
\usepackage{graphicx}
\usepackage{url}
\usepackage{cite}
\usepackage{hyperref}
\usepackage[nolist]{acronym}
\usepackage[caption=false,font=normalsize,labelfont=sf,textfont=sf]{subfig}
\usepackage{tikz}
\usetikzlibrary{arrows.meta,positioning,backgrounds,calc,fit}
\definecolor{phys}{HTML}{1F6FEB}
\definecolor{dig}{HTML}{0FA3A3}
\definecolor{comb}{HTML}{7B2FBF}
\definecolor{inkgray}{HTML}{444444}
\definecolor{los}{HTML}{1B9E5A}
\definecolor{nlosA}{HTML}{E8590C}
\definecolor{nlosB}{HTML}{7B2FBF}
\tikzset{
 bs/.pic={
   \draw[very thick,inkgray] (-0.22,0)--(0,1.0)--(0.22,0);
   \draw[very thick,inkgray] (-0.13,0.45)--(0.13,0.45);
   \draw[very thick,inkgray] (0,1.0)--(0,1.28);
   \draw[inkgray,thin] (0,1.28)++(140:0.18) arc (140:40:0.18);
   \draw[inkgray,thin] (0,1.28)++(140:0.30) arc (140:40:0.30);
 },
 ue/.pic={
   \draw[very thick,inkgray,fill=white,rounded corners=1.4pt] (-0.17,-0.3) rectangle (0.17,0.34);
   \draw[inkgray,thin,fill=phys!12,rounded corners=0.6pt] (-0.12,-0.18) rectangle (0.12,0.26);
 },
 bldg/.pic={
   \draw[thick,inkgray,fill=inkgray!10] (-0.9,0) rectangle (0.9,0.95);
   \foreach \x in {-0.6,-0.2,0.2,0.6}{\foreach \y in {0.2,0.5,0.78}{
     \fill[inkgray!35] (\x-0.09,\y-0.07) rectangle (\x+0.09,\y+0.07);}}
 },
}
\newcommand{\caricon}[2]{%
  \begin{scope}[shift={#1}]
   \draw[thick,inkgray,fill=#2,rounded corners=2pt] (-0.5,0.02) rectangle (0.5,0.2);
   \draw[thick,inkgray,fill=#2] (-0.26,0.2)--(0.2,0.2)--(0.12,0.4)--(-0.18,0.4)--cycle;
   \fill[inkgray] (-0.28,0.02) circle (0.09);\fill[white] (-0.28,0.02) circle (0.038);
   \fill[inkgray] (0.28,0.02) circle (0.09); \fill[white] (0.28,0.02) circle (0.038);
  \end{scope}}

\newtheorem{assumption}{Assumption}
\newtheorem{proposition}{Proposition}
\newtheorem{lemma}{Lemma}
\newtheorem{corollary}{Corollary}

\newtheorem{theorem}{Theorem}

\DeclareMathOperator*{\argmin}{arg\,min}
\DeclareMathOperator*{\argmax}{arg\,max}
\newcommand{\E}{\mathbb{E}}
\newcommand{\CN}{\mathcal{CN}}

\newcommand{\hPW}{\hat{h}_{\mathrm{PW}}}
\newcommand{\hPD}{\hat{h}_{\mathrm{P+D}}}
\newcommand{\hDT}{\hat{h}_{\mathrm{DT}}}
\newcommand{\sDT}{\sigma_{\mathrm{DT}}^{2}}

\begin{document}
\begin{acronym}[MMSE]
	\acro{PW}{physical world}
	\acro{MLE}{maximum likelihood estimator}
    \acro{CSI}{channel state information}
    \acro{DT}{digital twin}
    \acro{BS}{base station}
    \acro{UE}{user equipment}
    \acro{MISO}{multiple input single output}
    \acro{OFDM}{orthogonal frequency division multiplexing}
    \acro{ML}{maximum likelihood}
    \acro{MAP}{maximum a posteriori}
    \acro{BLUE}{best linear unbiased estimator}
    \acro{CRB}{Cram\'er--Rao bound}
    \acro{MMSE}{minimum mean square error}
    \acro{MSE}{mean square error}
    \acro{SNR}{signal to noise ratio}
    \acro{CLT}{central limit theorem}
    \acro{MIMO}{multiple-input and multiple-output}
    \acro{RIS}{reconfigurable intelligent surfaces}
    \acro{ISAC}{integrated sensing and communications}
\end{acronym}

\title{How Much Training is Needed with a Digital Twin?}

\author{Ahmad Bazzi,  Marwa Chafii 
\thanks{
Ahmad Bazzi and Marwa Chafii are with Engineering Division, New York University (NYU) Abu Dhabi, 129188, UAE and NYU WIRELESS,
NYU Tandon School of Engineering, Brooklyn, 11201, NY, USA (email: \href{ahmad.bazzi@nyu.edu}{ahmad.bazzi@nyu.edu}, \href{marwa.chafii@nyu.edu}{marwa.chafii@nyu.edu}).}
\thanks{Manuscript received xxx}}

\markboth{IEEE Transactions on WIRELESS COMMUNICATIONS}%
{Bazzi: How Many Pilots Is a Digital Twin Worth?}

\maketitle

\begin{abstract}
The following paper addresses how much pilot training is needed when a digital twin (DT) of the wireless radio channel is available to aid a wireless communication system with a channel estimation task. The DT of a wireless channel is widely expected to reduce the pilot overhead of channel estimation, following the informal rule that \emph{``the more accurate the twin, the fewer pilots are needed.''} This trade-off, however, has only ever been demonstrated empirically and never quantified. We close this gap by treating the DT as a complementary measurement of the channel that the receiver fuses with its pilot observations in the physical world. Consequently, fusing the physical and digital worlds through the best linear unbiased estimator, we derive a DT-aided Cram\'er-Rao bound, and from it a \emph{pilot-equivalence law} that converts DT fidelity into an equivalent number of training symbols. For a biased twin unknown to the estimator, we obtain the exact mismatch threshold beyond which trusting the DT is worse than ignoring it. 
We quantify how much training is needed with the DT to attain a desired mean square error on channel estimation. 
Particular cases are discussed to tell when training in the physical world can be completely bypassed. 
We finally translate these results into a block-fading achievable rate whose optimal training length is the unique root of a single equation, and identify the DT fidelity above which pilot training can be dispensed with altogether.
Extensive numerical results corroborate closed-form expression and reveal that the value of a DT is largest at finite signal-to-noise ratio and vanishes in both the low- and high-SNR limits.
\end{abstract}

\begin{IEEEkeywords}
Digital twin, channel estimation,
pilot overhead, channel state information, 6G.
\end{IEEEkeywords}

\section{Introduction}\label{sec:intro}
\IEEEPARstart{O}{{riginating}} in the manufacturing and aerospace communities two decades ago~\cite{10255711}, and in particular within the United States Air Force Research Laboratory, a \ac{DT} is a virtual replica of a physical entity \cite{9854866,10198573,11575744} that is kept synchronized with it through a continuous exchange of measurements and predictions~\cite{10234596,9994764}. In wireless communications, the \ac{DT} has become a pillar of the $6$G vision, which contributes in maintaining a live model of the radio environment, traffic, and hardware, and serving as a risk free \emph{sandbox} in which the network can be designed and optimized without experimenting on the live system~\cite{10628026,10012285,10417097}. The \ac{DT} can help $6$G wireless technologies enable new applications such as autonomous driving, indoor localization \cite{li2026xl}, and environmental monitoring and imaging \cite{11519592,zhang2024new,11400690,11173662,10475383}, in addition to \ac{ISAC} \cite{10285442}. 
Moreover, a \ac{DT} can also be used as a simulation environment that goes beyond \ac{ISAC} and can be extended for more complex tasks with different multimodal wireless networks \cite{11534549}, where communications and radio-frequency are complemented with additional modalities, such as cameras \cite{fan2026heterogeneous}.
However, the usefulness of any given \ac{DT} is governed by a single quantity, which is the \ac{DT}'s \emph{fidelity}, namely how faithfully the virtual model tracks the physical one. This has been formalized both as model uncertainty induced by limited data~\cite{10234596,9812625} and, information theoretically, as a joint function of model quality and the information exchanged between the physical and digital worlds~\cite{9994764}.

At the network level, \acp{DT} support zero touch $6$G management and resource orchestration \cite{10716598} across edge, aerial, enable security \cite{zhang2026ambsentry,lei2026resourceallocationsecuredualuavassisted}, vehicular, and industrial systems~\cite{10638530,10930648,10234388,10234624,10070572,10061692,10972174,10628034}, integrated sensing and communications \cite{10255711,10836879,mollahosseini2025integrated,zhang2025integrated} and edge computing links~\cite{10234427,10319784,10371218,10944626,10255711}, and are increasingly driven by generative artificial intelligence~\cite{10628026,10417097,10012285}, with recent analysis bounding the deployed performance of a \ac{DT} trained policy by a formal twin to reality discrepancy~\cite{11215841,10253478}. \emph{Our interest, however, lies one layer down, which is within the \ac{DT} of the wireless radio channel.}

The \ac{DT} of a wireless radio channel can be built by importing a $3$ dimensional model of some site-specific environment, assigning electromagnetic materials to its surfaces, and possibly synthesize the channel through possible ray tracing approaches~\cite{10234421,10990226,11195801}. On this basis, the \acp{DT} can fuse ray tracing with learning for multi-spectrum propagation modeling~\cite{10234421}, map user positions to statistical \ac{CSI} via some diffusion model~\cite{10906057}, counter channel aging by generative data augmentation~\cite{10417075}, track \ac{RIS} channels \cite{11162110,10854667} with physics informed neural networks~\cite{11016226,11457984}, and place ray tracing directly in the vehicular simulation process~\cite{11513690}. \emph{Since the rendered channel rests on an imperfect geometry and material description, however, a \ac{DT} never matches reality exactly; this residual sim-to-real gap, embedded in geometry and material mismatch, is one fundamental challenge in what caps fidelity} ~\cite{10234596,9994764}.

But even with this gap, the \ac{DT} has proven to be a powerful tool for lowering  overhead of pilot based channel acquisition, where the overhead reduction will be most useful in the large antenna array regime and millimeter wave bands envisioned for $6$G, because pilots are expensive due to a multitude of reasons: $(i)$ pilots will create overhead in learning the channel, and training too much will eventually lower the achievable rate when communicating within a fixed coherence interval, $(ii)$ reusing pilots across cells introduces the so-called pilot contamination, and $(iii)$ and the cost of beam sweeping grows with the number of antenna elements~\cite{10990226}, which becomes more dramatic in $6$G as massive \ac{MIMO} is expected to be used \cite{11573792}. \emph{A first body of work exploits the fact that a site-specific \ac{DT} can generate its own channel data.} Such synthetic \ac{CSI} has been used to train feedback and beam management networks, which then require far less real world data than models trained from measurements alone~\cite{11195801,11006050,11125862,11345189}. \emph{A second body of work uses the \ac{DT} rendered channel directly, which is more optimistic as more confidence is placed on the \ac{DT} itself.} Here, the channel produced by the \ac{DT} drives the precoder through some over-the-air channel impulse response inference~\cite{10990226}, and so statistical \ac{CSI} can be obtained from the user positions without transmitting any pilots at all~\cite{10906057}.

Across the literature, an informal statement reads as follows: \textit{``the more accurate the \ac{DT}, the fewer pilots/training are needed.''}~\cite{10906057,10990226,11195801,10942924,jiang2023digital}. \textbf{It is, however, only ever demonstrated empirically for a particular system. It has never been quantified, to the best of the author's knowledge.} For instance, the \ac{DT} approach in \cite{11195801} contains the relevant site-specific information, which was stated to require less feedback overhead to achieve the same or even better performance than a generic dataset. Furthermore, \cite{10906057} does not require any pilots so that their framework can generate statistical \acp{CSI} from a large number of positions, however achieving satisfactory results. Also, \cite{jiang2023digital} argues that if the digital replica is adequately accurate, the real world channel acquisition can even be bypassed. Moreover, \cite{10942924} infers that a \ac{DT} can predict the information about the wireless channel in the real environment even by eliminating the channel acquisition overhead. Also, \cite{10990226} provides numerical results showing the accuracy of the proposed channel \ac{DT} without \ac{CSI} acquisition. Meanwhile, \cite{arnold2024vision} highlighted that a key challenge in the application of \ac{DT} technology is the channel simulators requirements on the accuracy of the 3D \ac{DT}. In retrospect, one may even wonder, \emph{how many pilots is a \ac{DT} worth? how does that worth scale with some \ac{DT}'s uncertainty about the environment, and when does trusting a \ac{DT} become worse than using no \ac{DT} at all?} or even more broadly, \emph{how much training is needed with a digital twin ?}

Classical training theory quantifies pilots against \ac{SNR} and/or number of antennas in the \ac{MIMO} case~\cite{hassibi2003howmuch}, while existing notions of \ac{DT} fidelity~\cite{9994764,11215841} do not translate into pilots. We intend to close this gap by modeling the \ac{DT} as supplemental information on top of the training observations in the physical world, which serves an additional  measurement of the wireless channel. In the following, we highlight the novel contributions and shed light on some important insights appearing in this manuscript. In the following, we highlight the novel contributions and shed light on some important insights appearing in this manuscript. To this end, we summarize our contributions as follows.

\begin{itemize}
\item \textbf{\ac{DT} measurement model.} We treat the \ac{DT} as a complementary measurement of the wireless channel, available to the receiver on top of the training observations acquired in the \ac{PW}. We then establish the conditions under which this measurement admits a Gaussian model, by invoking the Lindeberg-Feller \ac{CLT} over the  path errors of the \ac{DT}. We further show that both moments of the Gaussian model are dictated by the electromagnetic properties the \ac{DT} assigns to the surfaces. In particular, we simulate the \ac{DT} error from the Fresnel coefficients and the full polarimetric path gain over a deployment ensemble, and we demonstrate that a wrong material assumption increases the variance, whereas a slowly varying path phase turns into a bias that turns out to be frequency-dependent.

\item \textbf{Pilot equivalence of a \ac{DT} and its mismatch threshold.} We fuse the \ac{PW} and the digital world through the \ac{BLUE} estimator and derive a \ac{DT} aided \ac{CRB} that interpolates continuously between pilot only estimation, obtained for an uninformative \ac{DT}, and genie aided \ac{CSI}, obtained for a perfect \ac{DT}. Thanks to this expression, we derive a pilot equivalence law that converts the precision of the \ac{DT} into an equivalent number of training symbols, thereby giving a measure for \ac{DT} fidelity, together with the training length needed to reach a target \ac{MSE}. For the case of a biased \ac{DT} that is unknown to the estimator, we obtain the exact mismatch threshold beyond which trusting the \ac{DT} is  worse than ignoring it. Beyond that threshold, the \ac{DT} is worth a negative number of pilots, an effect we refer to as unlearning of the \ac{PW}.

\item \textbf{Achievable rate and optimal training length.} We translate the above estimation results into an achievable rate over a block fading channel, for both an unbiased and a biased \ac{DT}, and we derive the training length that maximizes this rate as the unique root of a single equation. Thanks to these expressions, we recover the pilot-only and the genie-aided rates as limiting cases, and we identify the \ac{DT} fidelity above which training can be dispensed with altogether.
\end{itemize}

Finally, we present extensive numerical results that corroborate every closed form expression and unveil some important insights, i.e. 
\begin{itemize}
\item An incorrect material assumption within the \ac{DT} can increase the \ac{DT} uncertainty (or the \ac{DT} variance) by a factor of $3.8$ when the surfaces are rendered as a light dielectric instead of a near metal, while the very same error turns into a bias at lower frequencies.

\item The number of pilots a \ac{DT} replaces decays inversely with the operating \ac{SNR} in the \ac{PW} for the unbiased case and is further reduced through a factor that depends on the bias as well as the amount of training used for the biased case, according to the pilot-equivalence law.

\item Training in the \ac{PW} shrinks as the \ac{DT} becomes more accurate, to the point where no training is needed at all, and it increases once the \ac{DT} is biased or less accurate. For a coherence block of $30$ symbols and a \ac{DT} variance of $0.05$, an unbiased \ac{DT} attains $3\,\mathrm{bits/s/Hz}$ with a single training symbol, whereas a bias of $0.3$ needs $3$ training symbols to deliver $2.75\,\mathrm{bits/s/Hz}$, which is exactly the best rate attained with no \ac{DT} at all, and a bias of $0.55$ needs about $4$ training symbols while delivering only $2.5\,\mathrm{bits/s/Hz}$. An unbiased \ac{DT} operating at $10 \mathrm{dB}$ requires no training at all to achieve a channel \ac{MSE} error of $10^{-2}$ as long as the \ac{DT} variance does not exceed $10^{-3}$.
\item The value of a \ac{DT} in terms of achievable rate is maximized (when compared to the achievable rate of no \ac{DT}) at a moderate \ac{SNR} and vanishes in both the low and the high \ac{SNR} limits. For instance, when the variance of the \ac{DT} is $0.05$, the optimal \ac{SNR} for maximum value of a \ac{DT} in terms of achievable rate occurs at $2.5$, and that \ac{SNR} decreases with increasing bias. For a highly biased \ac{DT}, e.g. $0.35$, there exists \ac{SNR} regions where the \ac{DT} is more harmful over a mid \ac{SNR} range.
\end{itemize}

\textbf{Organization}: The remainder of this paper is organized as follows. Section~\ref{sec:model} introduces the system model, casting the \ac{DT} as a complementary Gaussian measurement of the channel. Section~\ref{sec:analysis} fuses this measurement with the physical-world pilots through the best linear unbiased estimator and derives the DT-aided \ac{CRB}, the pilot-equivalence law, and, for a biased twin, the mismatch threshold and the associated unlearning regime. Section~\ref{sec:cap} translates these estimation results into a block-fading achievable rate and obtains the optimal training length in closed form. Section~\ref{sec:sim} presents numerical results that corroborate the closed-form expressions and quantify the value of a DT across signal-to-noise ratio, \ac{DT}variance, as well as the \ac{DT} bias. Finally, Section~\ref{sec:concl} concludes the paper.

\textbf{Notation}: Lowercase letters denote scalars and $\E\{\cdot\}$ is the statistical expectation. For a complex number $z\in\mathbb{C}$, $|z|$ is its magnitude. $\CN(\mu,\sigma^2)$ denotes the circularly symmetric complex Gaussian distribution with mean $\mu$ and variance $\sigma^2$. The operator $(x)^{+}\triangleq\max(x,0)$.

\section{System Model}\label{sec:model}
\begin{figure*}[!t]
\centering
\resizebox{\textwidth}{!}{%
\begin{tikzpicture}[>=Stealth,font=\small,
  losS/.style={los,line width=1.3pt}, naS/.style={nlosA,line width=1.1pt},
  nbS/.style={nlosB,line width=1.1pt}, dot/.style={circle,fill=inkgray,inner sep=1.1pt}]
\fill[phys!5,rounded corners=5pt] (-0.15,0.15) rectangle (6.65,4.75);
\fill[dig!6,rounded corners=5pt]  (7.75,0.15) rectangle (14.55,4.75);
\pic at (0.7,2.25){bs}; \node[font=\footnotesize\bfseries] at (0.7,1.9){BS};
\pic at (6.0,2.05){ue}; \node[font=\footnotesize\bfseries] at (6.0,1.5){UE};
\pic at (3.4,3.05){bldg};
\caricon{(3.4,0.95)}{phys!25}\node[font=\footnotesize] at (3.4,0.45){car};
\draw[->,inkgray,thin] (4.05,1.0)--(4.5,1.0);
\coordinate (S) at (0.86,2.62);\coordinate (U) at (5.85,2.08);
\draw[losS] (S)--(U);
\draw[naS] (S)--(3.4,3.05)--(U);\node[dot] at (3.4,3.05){};
\draw[nbS] (S)--(3.4,1.35)--(U);\node[dot] at (3.4,1.35){};
\node[los, fill=phys!5,inner sep=1pt,font=\scriptsize] at (4.7,2.18){LoS};
\node[nlosA,font=\scriptsize] at (1.85,3.02){NLoS$_1$};
\node[nlosB,font=\scriptsize] at (1.85,1.6){NLoS$_2$};
\node[anchor=west,text=phys,font=\bfseries] at (-0.1,4.55){Physical world};
\begin{scope}[xshift=7.9cm]
\pic at (0.7,2.25){bs}; \node[font=\footnotesize\bfseries] at (0.7,1.9){BS};
\pic at (6.0,2.05){ue}; \node[font=\footnotesize\bfseries] at (6.0,1.5){UE};
\pic at (3.4,3.05){bldg};
\caricon{(3.4,0.95)}{inkgray!8}
\caricon{(3.95,0.95)}{dig!18}
\draw[<->,inkgray,thin] (3.4,0.6)--(3.95,0.6);\node[font=\scriptsize] at (3.68,0.38){};
\coordinate (S2) at (0.86,2.62);\coordinate (U2) at (5.85,2.08);
\draw[losS,dashed] (S2)--(U2);
\draw[naS,dashed] (S2)--(3.4,3.05)--(U2);\node[dot] at (3.4,3.05){};
\draw[nbS,dashed] (S2)--(3.95,1.35)--(U2);\node[dot] at (3.95,1.35){};
\node[los, fill=dig!6,inner sep=1pt,font=\scriptsize] at (4.7,2.18){LoS};
\node[anchor=west,text=dig,font=\bfseries] at (-0.1,4.55){Digital world};
\node[align=center,font=\scriptsize,text=dig] at (3.0,0.0){material mismatch $\Rightarrow (b,\sDT)$};
\end{scope}
\draw[-{Stealth[length=3mm]},line width=1.4pt,dig] (6.75,2.5)--(7.7,2.5)
  node[midway,above,font=\scriptsize,text=dig,align=center]{ray\\tracing};
\end{tikzpicture}}
\caption{The \ac{PW} propagation and its ray traced replica in the digital world, where a LoS path and two NLoS paths, off a building and off a passing car, link the \ac{BS} to the \ac{UE}. The \ac{DT} reproduces the dynamic scene only approximately and assigns the surfaces materials that differ from the true ones, which perturbs the path gains, and gives rise to the bias $b$ and the variance $\sDT$ in \eqref{eq:prior}.}
\label{fig:scene}
\end{figure*}

We analyze downlink transmission over a narrowband subcarrier within a \ac{MISO} and \ac{OFDM} system. 
The receiver intends to estimate the channel coefficient $h_0 \in \mathbb{C}$ that exists between the \ac{BS} and \ac{UE}. Within the subcarrier of interest, this channel is represented as the superposition of $L$ multipath components,
\begin{equation}
h_{0} = \sum\nolimits_{\ell=1}^{L} \alpha_{\ell}\,e^{-j2\pi f\tau_{\ell}},
\label{eq:multipath}
\end{equation}
where path $\ell$ has complex gain $\alpha_{\ell}$ and delay $\tau_{\ell}$ set by
the physical world geometry and the electromagnetic properties of the scatterers within the physical environment, and $f$ represents frequency.
Following the block fading training model of \cite{hassibi2003howmuch}, within a
channel coherence interval of $T$ symbols the \ac{BS} transmits $T_{\tau}\le T$
unit power training symbols (pilots), leaving $T-T_{\tau}$ symbols for data. For simplicity's sake, we assume an all-ones pilot frame. The
\ac{UE} observes noisy samples of the channel $h_{0}$,
\begin{equation}
y_{i} = h_{0} + n_{i}, 
\label{eq:pilot}
\end{equation}
where $ n_{i}\sim\CN(0,\sigma_{n}^{2})$ is additive white Gaussian noise over all the subcarriers, which is considered to be white and independent over the subcarriers, whose noise power is $\sigma_n^2$. The \ac{SNR} is defined as $\rho \triangleq 1/\sigma_{n}^{2}$.

Now let us assume a \ac{DT} intends to approximate the \ac{PW} geometry, including the material models. Moreover, a \ac{DT} attempts to reconstruct the superposition \eqref{eq:multipath} from its \emph{model} of the environment, where it generates a gain $\hat\alpha_{\ell}$ and delay $\hat\tau_{\ell}$ for each path it resolves over a path set $\hat{\mathcal{L}}$. For instance, the \ac{DT} can be a ray tracer which generates
\begin{equation}
\hDT = \sum\nolimits_{\ell\in\hat{\mathcal{L}}} \hat\alpha_{\ell}\,e^{-j2\pi f\hat\tau_{\ell}}.
\label{eq:twinpaths}
\end{equation}

\begin{figure*}[!t]
\centering
\subfloat[Error impact over different assumed materials.]{\includegraphics[height=2.25in,width=3.5in]{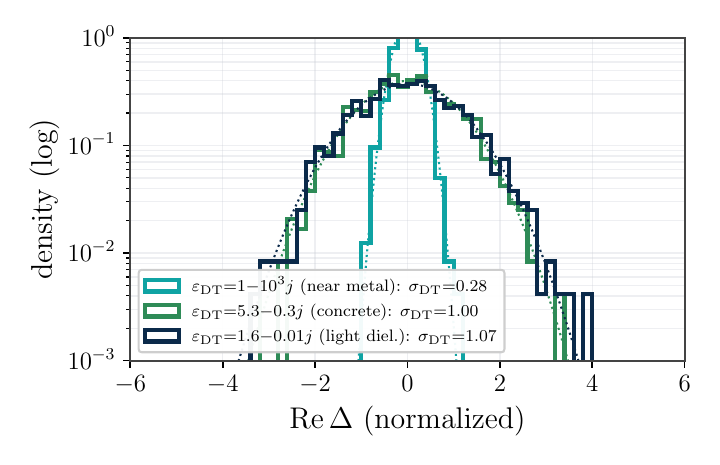}%
\label{fig:clt-a}}
\hfil
\subfloat[Error impact over carrier frequencies.]{\includegraphics[height=2.25in,width=3.5in]{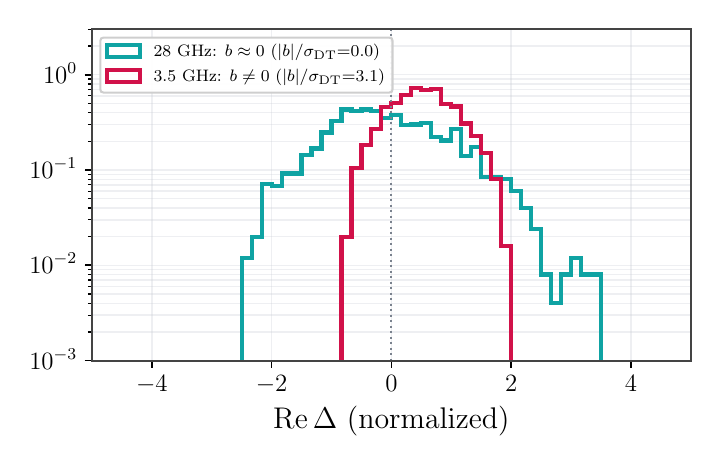}%
\label{fig:clt-b}}
\caption{Histogram of the \ac{DT} error $\Delta$ in a metal-walled room modelled by the \ac{DT} as the wrong material: (a) its variance $\sDT$ grows as the assumed material moves further from metal while staying zero mean, and (b) with the material fixed at concrete the same error stays zero mean at a high carrier ($28\,\mathrm{GHz}$) but adds coherently into a bias $b$ at a low carrier ($3.5\,\mathrm{GHz}$), realizing the model in \eqref{eq:prior}.}
\label{fig:clt}
\end{figure*}

For a single specular reflection off a planar surface of complex relative permittivity $\varepsilon_{r}$ at incidence angle $\theta$, the reflected field depends on the incident field and the Fresnel coefficient \cite{orfanidis}, which in turn fundamentally depend on the transverse electric and magnetic fields, defined respectively as
\begin{equation}
\begin{aligned}
\Gamma_{\mathrm{TE}}(\varepsilon_{r},\theta)&=
\frac{\cos\theta-\sqrt{\varepsilon_{r}-\sin^{2}\theta}}{\cos\theta+\sqrt{\varepsilon_{r}-\sin^{2}\theta}},\\[2pt]
\Gamma_{\mathrm{TM}}(\varepsilon_{r},\theta)&=
\frac{\varepsilon_{r}\cos\theta-\sqrt{\varepsilon_{r}-\sin^{2}\theta}}{\varepsilon_{r}\cos\theta+\sqrt{\varepsilon_{r}-\sin^{2}\theta}}.
\end{aligned}
\label{eq:fresnel}
\end{equation}
Generally speaking, a path is a chain of $N_{\ell}$ specular reflections, where each path hits a surface of material $i$ defined by complex relative permittivity $\varepsilon^{(i)}$ at an incidence angle of $\theta^{(i)}$. Following \cite{10705152}, the specular field spreads as a spherical wave from the image source, hence, the path carries the free space path loss set by its \emph{total} length,
\begin{equation}
g_{\ell}=\frac{1}{d_{\ell}},\qquad d_{\ell}=\sum\nolimits_{i=1}^{N_{\ell}+1}d_{\ell}^{(i)},
\label{eq:pathloss}
\end{equation}
where $d_{\ell}^{(i)}$ is the length of the $i^{th}$ straight segment of path $\ell$, hence the $N_{\ell}+1$ segments run from the \ac{BS} to the first reflection point, between consecutive reflection points, and from the last reflection point to the \ac{UE}.
At each reflection $i$, the field is transformed by the diagonal Fresnel matrix $\boldsymbol{\Gamma}(\varepsilon^{(i)},\theta_{\ell}^{(i)})=\mathrm{diag}\big(\Gamma_{\mathrm{TE}},\Gamma_{\mathrm{TM}}\big)$, whose components are given in \eqref{eq:fresnel}. This transformation is applied in the local incidence frame of the surface, which is reached through the change of basis $\mathbf{W}_{\ell}^{(i)}$. The outgoing field therefore obeys
\begin{equation}
\mathbf{W}_{\ell}^{(i)}\,\mathbf{e}_{\ell,\mathrm{OUT}}^{(i)}
=\boldsymbol{\Gamma}\big(\varepsilon^{(i)},\theta_{\ell}^{(i)}\big)\,
\mathbf{W}_{\ell}^{(i)}\,\mathbf{e}_{\ell,\mathrm{IN}}^{(i)} .
\label{eq:refl}
\end{equation}
After the $N_{\ell}$ interactions the field arriving at the receiver is
$\mathbf{e}_{\ell,\mathrm{IN}}^{(N_{\ell}+1)}$, and the complex path gain is its
co-polarized projection onto the launched field $\mathbf{e}_{\ell,\mathrm{OUT}}^{(0)}$
\cite{10705152},
\begin{equation}
\alpha_{\ell}=g_{\ell}\,\frac{\big(\mathbf{e}_{\ell,\mathrm{OUT}}^{(0)}\big)^{ \mathsf T}
\,\mathbf{e}_{\ell,\mathrm{IN}}^{(N_{\ell}+1)}}
{\big\|\mathbf{e}_{\ell,\mathrm{OUT}}^{(0)}\big\|_{2}},
\qquad
\tau_{\ell}=\frac{1}{c}\sum\nolimits_{i=1}^{N_{\ell}+1}d_{\ell}^{(i)} .
\label{eq:gain}
\end{equation}
Equation \eqref{eq:gain} bundles the path loss \eqref{eq:pathloss} and the
reflections \eqref{eq:refl} into one scalar; for a single polarization with no cross
coupling it collapses to $\alpha_{\ell}=g_{\ell}\prod_{i}\Gamma(\varepsilon^{(i)},\theta_{\ell}^{(i)})=g_{\ell}R_{\ell}$,
with $g_{\ell}$ the free space path loss and $R_{\ell}$ the surface response,
recovering $\alpha_{\ell}=g_{\ell}\Gamma$ for a single reflection. In general, the
$\mathbf{W}_{\ell}^{(i)}$ contains the two polarizations.

When the \ac{DT} (or equivalently, the ray tracer it uses) assigns a wrong material, for example a glass material 
($\varepsilon_{\mathrm A}$) when the truth is concrete ($\varepsilon_{\mathrm B}$),
the geometry, delays $\tau_{\ell}$ and change of basis $\mathbf{W}_{\ell}^{(i)}$ are
unchanged but the reflection matrix $\boldsymbol{\Gamma}$ is impacted, so the overall error of a single path $\ell$ that experiences multiple bounces can be written as
\begin{equation}
\epsilon_{\ell}=\big[\alpha_{\ell}(\{\varepsilon_{\mathrm A}^{(i)}\})
-\alpha_{\ell}(\{\varepsilon_{\mathrm B}^{(i)}\})\big]\,e^{-j2\pi f\tau_{\ell}},
\label{eq:matpath}
\end{equation}
which can be expressed as $g_{\ell}[\Gamma(\varepsilon_{\mathrm A},\theta_{\ell})-\Gamma(\varepsilon_{\mathrm B},\theta_{\ell})]e^{-j2\pi f\tau_{\ell}}$ for the single reflection case.

As a consequence, the error a \ac{DT} commits is the deviation between the true model in \eqref{eq:multipath} from the assumed \ac{DT} model in \eqref{eq:twinpaths}, i.e. the \ac{DT} output differs from the true channel as
\begin{equation}
\begin{aligned}
\Delta \triangleq \hDT - h_{0} & = \sum\nolimits_{\ell} \epsilon_{\ell},
\end{aligned}
\label{eq:patherr}
\end{equation}
The randomness in \eqref{eq:patherr} is taken over a \emph{deployment
ensemble} $\mathcal{D}$, i.e. over all possible user positions, frequency and delays within the assumed environment, taken conditionally on the macroscopic channel $h_{0}$. For sake of compact representation, all expectations $\E_{\mathcal{D}}\{\cdot\}$ below are over $\mathcal{D}$, even when omitted. The errors are assumed to behave under the following assumption
\begin{assumption}[Lindeberg-Feller central limit theorem]\label{ass:clt}
Over the deployment ensemble $\mathcal{D}$ the channel errors committed by the \ac{DT}
$\{\epsilon_{\ell}\}_{\ell}$ in \eqref{eq:patherr} are independent across $\ell$,
circularly symmetric with finite variances
$\sigma_{\ell}^{2}\triangleq\mathrm{Var}_{\mathcal{D}}(\epsilon_{\ell})$. Writing
$s_{L}^{2}\triangleq\sum_{\ell}\sigma_{\ell}^{2}=\sDT$ for the total error power,
they satisfy \emph{Lindeberg's condition}: for every $\varepsilon>0$,
\begin{equation}
\frac{1}{s_{L}^{2}}\sum_{\ell}
\mathbb{E}_{\mathcal{D}} \Big[\,
\big|\epsilon_{\ell}-\mathbb{E}_{\mathcal{D}}\epsilon_{\ell}\big|^{2}\,
\mathbf{1}_{\{\,|\epsilon_{\ell}-\mathbb{E}_{\mathcal{D}}\epsilon_{\ell}|
>\varepsilon\, s_{L}\,\}}\Big]
\underset{L \to \infty}{\longrightarrow} 0.
\label{eq:lindeberg}
\end{equation}
\end{assumption}

Condition~\eqref{eq:lindeberg} relates to how the \ac{DT} modeling error is distributed across the paths over the deployment ensemble $\mathcal{D}$. It holds when the total error power
$\sDT$ is shared among many paths in the sense of \eqref{eq:lindeberg}, each carrying an error variance portion
$\sigma_\ell^2/s_L^2$, so that the \ac{DT} is comparably and mildly wrong on many paths at once. 
On the other hand, it fails when a single path holds most of the error variance, which then leads to heavy tails in the underlying distribution, hence the Gaussian model becomes no longer valid.
Under \textbf{Assumption~\ref{ass:clt}}, the Lindeberg-Feller central limit theorem applies to \eqref{eq:patherr}, where the normalized centered error
$(\Delta-b)/s_{L}$ converges in distribution, as $L\to\infty$, to a zero-mean
circularly symmetric standard Gaussian, in essence $\Delta \sim \CN \big(b,\,\sDT\big)$ where the bias term is $b \triangleq \E_{\mathcal{D}}\{\Delta\} = \sum_{\ell}\E_{\mathcal{D}}\{\epsilon_{\ell}\}$
and the variance is given as 
\begin{equation}
\sDT \triangleq \E_{\mathcal{D}}\{|\Delta-b|^{2}\}
     = \sum\nolimits_{\ell}\mathrm{Var}_{\mathcal{D}}(\epsilon_{\ell}).
\label{eq:s2dt}
\end{equation}
Writing the
centered fluctuation $e_{\mathrm{DT}}\triangleq\Delta-b\sim\CN(0,\sDT)$, the \ac{DT}
output is a \emph{noisy measurement} of the true channel,
\begin{equation}
\hDT = h_{0} + b + e_{\mathrm{DT}}, \qquad e_{\mathrm{DT}}\sim\CN(0,\sDT),
\label{eq:prior}
\end{equation}
which is a Gaussian observation of $h_{0}$ contributing Fisher information $1/\sDT$ as opposed to the
pilots that contribute $T_{\tau}\rho$. The variance $\sDT$ can be regarded as the amount of uncertainty the \ac{DT} holds, due to the mismatch between the \ac{DT} and the \ac{PW}. In short, the Gaussian model in \eqref{eq:prior} follows from the central limit theorem over the
many reflected paths under Assumption~\ref{ass:clt}.

We now confirm on a ray traced environment that a \ac{DT} with  wrong material can be captured by the two moments $(b,\sDT)$. Figure~\ref{fig:clt} depicts the histogram of the \ac{DT} error $\Delta $ in \eqref{eq:patherr} over the deployment ensemble $\mathcal{D}$ when the ray tracer assigns a wrong material to the surfaces, where its variance $\sDT$ depends on how wrong the assumed material is, and the bias $b$ appears when the path phases stay correlated across the deployment. In particular, we simulate the \ac{DT} measurement model \eqref{eq:prior} from the Fresnel coefficients in \eqref{eq:fresnel} and the full polarimetric gain \eqref{eq:gain}. We consider a $4\,\mathrm{m}\times4\,\mathrm{m}\times3\,\mathrm{m}$ room with metal walls, where the \ac{BS} is fixed and the \ac{UE} is swept over the room to average the entire deployment ensemble $\mathcal D$. The propagation paths are the specular reflection paths of the room, enumerated by the method of images, where each mirror image of the \ac{BS} across the walls is one reflected ray to the \ac{UE} traced through the reflection transformation \eqref{eq:refl} and projected by the gain \eqref{eq:gain}. 
The true surfaces are all metal (with $\varepsilon_{\mathrm{true}}=1-10^{7}j$), however the \ac{DT} assigns a wrong material, so that each reflection reflects with the incorrect Fresnel coefficients, hence  \eqref{eq:refl} is perturbed. In particular, in Figure \ref{fig:clt-a}, we simulate cases where the \ac{DT} assigns near metal, which is close to the true material ($\varepsilon_{\mathrm{DT}}=1-10^{3}j$), concrete ($\varepsilon_{\mathrm{DT}}=5.3-0.3j$), and a light dielectric material which is farthest from metal ($\varepsilon_{\mathrm{DT}}=1.6-0.01j$).
When the \ac{DT} assumes the wrong wall material, every reflected path picks up a small error in how strongly it reflects, and as these errors are independent across the many paths they add up constructively as powers, which contributes in widening the distribution of the total error. The further the assumed material deviates from the true metal (in terms of complex relative permittivity), the larger each path error becomes, so the variance $\sigma_{\mathrm{DT}}$ grows from $0.28$ for a near-metal assumption to $1$ for concrete to $1.07$ for a light dielectric, that is by a factor of about $3.8$ between the closest and the farthest material assumption.

In Figure~\ref{fig:clt-b} we fix the true material to metal (as is the case in Figure~\ref{fig:clt-a}), however the \ac{DT} assumes concrete material. The deployment ensemble here is considered to be a $2\,\mathrm{cm}$ patch. We study the effect of the carrier frequency on the \ac{DT} error $\Delta$, namely at a high carrier ($28\,\mathrm{GHz}$) the path phases decorrelate across the deployment ensemble and the same material error averages to a zero-mean spread with $b\approx0$, whereas at a low carrier ($3.5\,\mathrm{GHz}$) the phases stay correlated so the error adds coherently and a nonzero bias $b$ emerges.
When the wavelength is short relative to the region over which we average, e.g. at $28\,\mathrm{GHz}$, the phases of the reflected paths scramble from point to point, so the error cancels on average and the \ac{DT} error stays zero mean. On the other hand, when the wavelength is long compared to that region, e.g. at $3.5\,\mathrm{GHz}$, the phases slightly change relative to the wavelength, the same error adds up in the same direction, and it turns into a non-negligible bias.

\begin{figure*}[!t]
\centering
\resizebox{\textwidth}{!}{%
\begin{tikzpicture}[
  >=Stealth, font=\small,
  box/.style={rounded corners=2pt, draw=inkgray, semithick, align=center,
              inner sep=4pt, fill=white, minimum height=11mm},
  ph/.style ={rounded corners=2pt, draw=phys, semithick, align=center,
              inner sep=4pt, fill=phys!8, minimum height=11mm},
  dg/.style ={rounded corners=2pt, draw=dig, semithick, align=center,
              inner sep=4pt, fill=dig!12, minimum height=11mm},
  cmb/.style={rounded corners=2pt, draw=comb, thick, align=center,
              inner sep=5pt, fill=comb!8, minimum height=11mm},
  ar/.style ={-{Stealth[length=2.6mm]}, semithick, draw=inkgray},
  dar/.style={-{Stealth[length=2.6mm]}, semithick, draw=dig, dashed},
]
\begin{scope}[on background layer]
  \fill[phys!4,rounded corners=6pt] (-0.7,-0.95) rectangle (12.0,0.95);
  \fill[dig!6, rounded corners=6pt] (-0.7,-3.95) rectangle (12.0,-2.05);
\end{scope}
\node[anchor=west,text=phys,font=\bfseries] at (-0.65,1.25) {Physical world};
\node[anchor=west,text=dig, font=\bfseries] at (-0.65,-1.78) {Digital world};

\node[box,minimum width=11mm] (bs) at (0.4,0) {\textbf{BS}};
\node[box,minimum width=30mm] (env) at (3.7,0)
     {real propagation\\[-1pt]{\footnotesize geometry $+$ materials}};
\node[box,minimum width=11mm] (ue) at (7.0,0) {\textbf{UE}};
\node[ph,minimum width=36mm] (pw) at (10.4,0)
     {pilots $\bar y$\\[-1pt]{\footnotesize $\sim\CN \big(h_0,\,1/T_\tau\rho\big)$}};
\draw[ar] (bs) -- (env);
\draw[ar] (env) -- (ue);
\draw[ar] (ue) -- (pw);

\node[dg,minimum width=30mm] (rt) at (3.7,-3)
     {ray tracing replica\\[-1pt]{\footnotesize 3D model $+$ RT engine}};
\node[dg,minimum width=36mm] (dt) at (10.4,-3)
     {\ac{DT} $\hDT$\\[-1pt]{\footnotesize $\sim\CN \big(h_0+b,\,\sDT\big)$}};
\draw[ar] (rt) -- (dt);

\draw[dar] (env) -- node[right,align=left,font=\footnotesize]
     {renders\\(mismatch $b$,\,$\sDT$)} (rt);
\node[font=\footnotesize] at (8.2,-1.3) {$h_0=\sum_\ell \alpha_\ell e^{-j2\pi f\tau_\ell}$ (true channel)};

\node[cmb,minimum width=34mm] (cb) at (14.2,-1.5)
     {BLUE combiner\\[-1pt]{\footnotesize $\hPD=(1-w)\bar y+w\,\hDT$}};
\node[box,minimum width=20mm] (out) at (17.9,-1.5) {channel\\ estimate $\hPD$};
\draw[ar,rounded corners=2pt] (pw.east) -|
     node[near start,above,font=\footnotesize]{$1-w$} (cb.north);
\draw[ar,rounded corners=2pt] (dt.east) -|
     node[near start,below,font=\footnotesize]{$w$} (cb.south);
\draw[ar] (cb) -- (out);
\draw[ar] (out) -- ++(0,-1.05)
     node[below,align=center,font=\footnotesize]{data transmission\\ \& rate};
\end{tikzpicture}}
\caption{The receiver estimates the channel $h_0$ from two measurements: a \emph{physical} one (pilots $\bar y$, giving $\hPW$) and a \emph{digital} one (the ray tracing \ac{DT} $\hDT$), fused by the \ac{BLUE} estimator.}
\label{fig:sysmodel}
\end{figure*}

\section{Channel Estimation via \ac{DT}-aided Wireless System}\label{sec:analysis}
We consider three estimators of the deterministic $h_{0}$, distinguished by the information each uses.

\begin{itemize}
\setlength{\itemsep}{4pt}
\item[\emph{(i)}] \emph{Physical world}: We refer to the channel estimation procedure using pilots-only as the \emph{physical world} as it entails conducting a channel estimate solely using the pilots on a physical channel. It is well known that the \ac{ML} estimator from the pilots alone using the model in \eqref{eq:pilot} is given as follows
\begin{equation}
\begin{aligned}
\hPW = \argmax_{h}\, p\big(\{y_{i}\}\mid h_0\big) = \bar{y}.
\end{aligned}
\label{eq:pilotmse}
\end{equation}
Here $p(\{y_i\} \mid h)$ is the likelihood of the channel $h_0$ given the pilot observations $\{y_i\}$, that is, the joint density of the pilots evaluated as a function of $h_0$. Under the Gaussian pilot model in \eqref{eq:pilot}, the maximizer is the sample mean $\bar{y}$, so $\hPW$ is the \ac{MLE} formed from the physical world measurements alone. The \ac{MLE} estimator for $h_{0}$ is nothing other than the mean
\begin{equation}
\bar{y} \triangleq \frac{1}{T_{\tau}}\sum\nolimits_{i=1}^{T_{\tau}} y_{i}
\sim \CN \left(h_{0}, \frac{1}{T_{\tau}\rho}\right),
\label{eq:ybar}
\end{equation}
The \ac{MSE} of the estimation procedure can be shown to be \cite{kay2009fundamentals}
\begin{equation}
\label{eq:EPW}
	\mathcal{E}_{\mathrm{PW}} \triangleq \E\{|\hPW-h_{0}|^{2}\}=\frac{1}{T_{\tau}\rho},
\end{equation}
so that $T_{\tau}$ pilots contribute Fisher information $T_{\tau}\rho$ about $h_{0}$, which indicates that the \ac{MSE} on the channel estimation in the \ac{PW} improves only when more overhead is required on training the channel via increasing $T_\tau$ or when the system operates at a higher \ac{SNR} $\rho$.
\item[\emph{(ii)}] \emph{Digital world}: Given a \ac{DT} whose  measurement model $\hDT$ follows equation \eqref{eq:twinpaths} and whose errors follow \textbf{Assumption \ref{ass:clt}}, one can assert the equivalent Gaussian model in equation \eqref{eq:prior}. It immediately follows that the error on the \ac{DT} measurement is given as 
\begin{equation}
\mathcal{E}_{\mathrm{DT}} \triangleq \E\{|\hDT-h_{0}|^{2}\}=|b|^{2}+\sDT.
\label{eq:dwmse}
\end{equation}
Equation \eqref{eq:dwmse} tells us that the error depends naturally on the bias and the error variance of the \ac{DT}. 
\item[\emph{(iii)}] \emph{Physical and digital worlds}: The first two optimal estimators could be seen as two extremes, namely, the \ac{PW} channel estimator in \eqref{eq:pilotmse} is a typical channel estimator based on \ac{MLE} when no additional information about the channel is available. In other words, it relies on physical world observations only utilizing a known pilot sequence for channel estimation purposes. On the other hand, the second estimator using a \ac{DT} measurement following \eqref{eq:prior} relies on prior knowledge of the channel and accuracy of the \ac{DT} generator captured through $b$ and $\sDT$. In order to study the efficacy of having both \ac{PW} and \ac{DT} measurements, we can treat the \ac{DT} output \eqref{eq:prior} as a second independent measurement of $h_{0}$ that can aid the \ac{PW} channel estimation procedure. In essence, under the unbiased \ac{DT} case (i.e. $b=0$), the following weighted least squares problem can be considered
\begin{equation}
\hPD=\argmin_{h} T_{\tau}\rho\,\big|\bar{y}-h\big|^{2}
        +\tfrac{1}{\sDT}\,\big|\,\hDT-h\,\big|^{2},
\label{eq:mapls}
\end{equation}
whose minimizer is the inverse variance combination as 
\begin{equation}
\hPD = (1-w)\,\bar{y} + w\,\hDT, \qquad
w \triangleq \frac{1/\sDT}{\,T_{\tau}\rho + 1/\sDT\,}.
\label{eq:shrink}
\end{equation}
We consider the above estimator because under the Gauss-Markov theorem, the estimator in \eqref{eq:shrink} is also the \ac{BLUE} estimator, which achieves the lowest sampling variance among all linear unbiased estimators. Here, $w\in[0,1]$ is the weight on the \ac{DT}, and $\hPD$ interpolates between the other two estimators, namely $\hPD=\hPW$ at $w=0$, which is the case of an uninformative \ac{DT} and $\hPD=\hDT$ at $w=1$, which is the case of a perfectly confident \ac{DT}.
\end{itemize}

For a given \ac{DT}-aided system, one may be interested in the smallest channel estimation error achievable, i.e. how much can the \ac{DT} plus pilots enhance the \ac{MSE}. For that, we have the following proposition
\begin{proposition}[\ac{DT}-aided CRB]\label{prop:crb}
For an unbiased \ac{DT}, the minimum variance of any
unbiased estimator of $h_{0}$, attained by the \ac{BLUE} estimator in \eqref{eq:shrink} attains the \ac{CRB} of the joint model observing \eqref{eq:pilot} and \eqref{eq:prior} as
\begin{equation}
\mathcal{E}_{\mathrm{P+D}}(T_{\tau}) \triangleq \E\big\{|\hPD-h_{0}|^{2}\big\}
= \frac{1}{\,T_{\tau}\rho + 1/\sDT\,},
\label{eq:crbdt}
\end{equation}
with $\hPD$ by the \ac{BLUE} estimator in \eqref{eq:shrink} and the expectation taken over the
thermal noise $n_{i}$ and the \ac{DT} error variance $e_{\mathrm{DT}}$.
\end{proposition}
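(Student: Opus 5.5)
The plan is to treat $h_0$ as a deterministic complex parameter and view the data as $T_\tau+1$ independent complex Gaussian observations: the pilots $y_i\sim\CN(h_0,\sigma_n^2)$ from \eqref{eq:pilot}, and the twin output $\hDT\sim\CN(h_0,\sDT)$ from \eqref{eq:prior} with $b=0$. Independence of the thermal noise $n_i$ and the twin error $e_{\mathrm{DT}}$ is the modelling premise used in \eqref{eq:mapls}, and I will state it explicitly. The argument has three parts: compute the Fisher information of the joint model and hence the \ac{CRB}; compute the \ac{MSE} of the \ac{BLUE} combiner \eqref{eq:shrink}; and observe that the two coincide. Attainment then shows that the \ac{CRB} is the minimum variance over all unbiased estimators, not only linear ones.

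\emph{Fisher information.} The joint log-likelihood, up to constants, is
\begin{equation*}
\ln p=-\rho\sum\nolimits_{i=1}^{T_\tau}|y_i-h|^2-\tfrac{1}{\sDT}|\hDT-h|^2 .
\end{equation*}
Independence makes the Fisher informations add. For the scalar complex Gaussian mean model, the complex Fisher information (with Wirtinger derivatives, taken with respect to $h^*$) for a single observation of variance $\sigma^2$ is $1/\sigma^2$. This gives $J=T_\tau\rho+1/\sDT$, consistent with the remark after \eqref{eq:prior} that the twin contributes $1/\sDT$ and the pilots $T_\tau\rho$. Therefore any unbiased estimator satisfies $\E|\hat h-h_0|^2\ge 1/J$.

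\emph{\ac{MSE} of the combiner.} Writing $\hPD-h_0=(1-w)(\bar y-h_0)+w\,e_{\mathrm{DT}}$, the estimator is unbiased because the two weights sum to one and both measurements are unbiased. Using \eqref{eq:ybar} and independence, its variance is
\begin{equation*}
\frac{(1-w)^2}{T_\tau\rho}+w^2\sDT .
\end{equation*}
With $w$ from \eqref{eq:shrink} we have $1-w=T_\tau\rho/J$ and $w=(1/\sDT)/J$. Substituting, the variance becomes $(T_\tau\rho+1/\sDT)/J^2=1/J$, which is \eqref{eq:crbdt}. Minimizing the same quadratic in $w$ over $[0,1]$ recovers exactly this $w$, which confirms that it is the \ac{BLUE} among convex combinations of $\bar y$ and $\hDT$.

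\emph{Main obstacle.} The delicate point is not the algebra but handling the complex parameter carefully. One option is to work in the real parametrization $(\Re h,\Im h)$, where the Fisher matrix is $2J\,\mathbf I_2$ and the bound $\mathrm{tr}(\mathbf J^{-1})=1/J$ on the total \ac{MSE} follows. The other is to use the complex \ac{CRB} for circular Gaussians directly. A secondary point is efficiency. The score equals $J(\hPD-h)$ up to the conjugation convention, because $\partial \ln p/\partial h^*=\rho\sum_i(y_i-h)+(\hDT-h)/\sDT=J(\hPD-h)$. This factorization shows that $\hPD$ is efficient, hence minimum variance among \emph{all} unbiased estimators. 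It also shows that $\hPD$ is the \ac{ML} estimator, which justifies the phrase ``attains the \ac{CRB}'' in the statement.
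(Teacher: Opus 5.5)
Your proposal is correct and follows the paper's route. Fisher information adds across the independent pilot average and twin measurement to give $J=T_\tau\rho+1/\sDT$, and the BLUE attains $1/J$. The paper also checks this attainment in its ``alternative proof'', by substituting $\bar y=h_0+\bar n$ and $\hDT=h_0+e_{\mathrm{DT}}$ into \eqref{eq:shrink}. Your extra care with the complex parametrization and the score factorization $\partial\ln p/\partial h^*=J(\hPD-h)$ only makes explicit the efficiency claim that the paper states without detail.
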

\begin{IEEEproof}
For an unbiased \ac{DT}, the pilots and the \ac{DT} are two \emph{independent} unbiased
Gaussian measurements of the same deterministic scalar $h_{0}$: the pilot average
$\bar{y}\sim\CN \big(h_{0},1/(T_{\tau}\rho)\big)$ from \eqref{eq:ybar}, carrying
Fisher information $T_{\tau}\rho$, and the \ac{DT} $\hDT\sim\CN(h_{0},\sDT)$ from
\eqref{eq:prior} with $b=0$, carrying Fisher information $1/\sDT$. For
independent observations Fisher information adds, so the information about
$h_{0}$ is $J=T_{\tau}\rho+1/\sDT$ and the \ac{CRB} is $1/J$. The \ac{CRB} is met by the BLUE in \eqref{eq:shrink} with error $J^{-1} = (T_{\tau}\rho+1/\sDT)^{-1}$, which is \eqref{eq:crbdt}. An alternative proof follows by using the expression of the \ac{BLUE} estimator in \eqref{eq:shrink}, where one can replace $\bar{y}$ with $h_0 + \bar{n}$ where $\bar{n} \sim \mathcal{CN}(0,\frac{1}{T_\tau \rho})$ and $\hDT$ with $h_0  + e_{\mathrm{DT}}$.
\end{IEEEproof}
In the context of \acp{DT} with pilots, equation \eqref{eq:crbdt}, hereby referred to as \emph{\ac{DT}-aided} bound, reveals two extremes, i.e.
\begin{equation}
\begin{aligned}
\lim_{\sDT\to\infty} \mathcal{E}_{\mathrm{P+D}} &= \frac{1}{T_{\tau}\rho} = \mathcal{E}_{\mathrm{PW}}\ \ (\text{pilot-only}),\\
\lim_{\sDT\to 0} \mathcal{E}_{\mathrm{P+D}} &= 0\ \ (\text{genie \ac{DT}}).
\end{aligned}
\label{eq:limits}
\end{equation}
The first limit tells us that, for the case of unbiased \ac{DT}, when the \ac{DT} becomes highly uninformative, one would have to rely on the physical world through pilot training only, and hence the performance is dictated by either the \ac{SNR} or training time. On the other hand, and for the case of unbiased \ac{DT}, the more informative the \ac{DT} becomes, the less one can rely on the \ac{PW}, i.e. one would rely less on the \ac{SNR} or the training time. The extreme case is a genie \ac{DT} which is capable of generating the true channel, on average.
Next, we target a question of how much training in the \ac{PW} a \ac{DT}-aided system requires to reach a desired \ac{MSE} on the channel estimate. 
\begin{corollary}[Pilots for a target accuracy]\label{cor:pilots}
The training length required to reach a target \ac{MSE} $\epsilon$ of a \ac{DT}-aided system observing measurements from \eqref{eq:pilot} and \eqref{eq:prior} is given as
\begin{equation}
T_{\tau}^{\star}(\epsilon,\sDT) = \frac{1}{\rho}\left(\frac{1}{\epsilon}-\frac{1}{\sDT}\right)^{ +}.
\label{eq:npilots}
\end{equation}
\end{corollary}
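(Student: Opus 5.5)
The plan is to invert the \ac{DT}-aided bound of Proposition~\ref{prop:crb}. Because the \ac{BLUE} estimator in \eqref{eq:shrink} attains $\mathcal{E}_{\mathrm{P+D}}(T_\tau)=1/(T_\tau\rho+1/\sDT)$, the required training length is the smallest $T_\tau\ge 0$ that satisfies $\mathcal{E}_{\mathrm{P+D}}(T_\tau)\le\epsilon$. As in the corollary, we treat $T_\tau$ as a continuous quantity; an integer count would then be obtained by taking the ceiling.

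The first step is to note that $T_\tau\mapsto \mathcal{E}_{\mathrm{P+D}}(T_\tau)$ is strictly decreasing on $[0,\infty)$ whenever $\rho>0$, since its denominator is affine with positive slope $\rho$. The feasible set $\{T_\tau\ge0:\mathcal{E}_{\mathrm{P+D}}(T_\tau)\le\epsilon\}$ is therefore an interval $[T^\star,\infty)$, and its left endpoint $T^\star$ is the answer.

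Next, because both sides are positive, the condition $\mathcal{E}_{\mathrm{P+D}}(T_\tau)\le\epsilon$ is equivalent to $T_\tau\rho+1/\sDT\ge 1/\epsilon$, which in turn is equivalent to $T_\tau\ge \frac{1}{\rho}\left(\frac{1}{\epsilon}-\frac{1}{\sDT}\right)$. This leads to two cases. If $\epsilon<\sDT$, the right-hand side is positive, so it is the minimal feasible length. If $\epsilon\ge\sDT$, the \ac{DT} alone already meets the target, since $\mathcal{E}_{\mathrm{P+D}}(0)=\sDT\le\epsilon$, so no pilots are needed and the minimum over $T_\tau\ge0$ is $0$. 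Combining the two cases with the operator $(\cdot)^+$ gives exactly \eqref{eq:npilots}.

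No step here is a genuine obstacle. The only points that need care are the boundary case $T_\tau=0$ and the interpretation of the result. At $T_\tau=0$, $\bar y$ is undefined, but the fused estimator reduces to $\hDT$ with $w=1$, so the bound $\sDT$ still holds consistently. As for interpretation, $\epsilon$ is a target on the \ac{BLUE}/\ac{CRB} \ac{MSE} under the unbiased-\ac{DT} assumption of Proposition~\ref{prop:crb}.
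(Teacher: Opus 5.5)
Your proposal is correct and follows the same route as the paper. Both arguments use the strict monotonicity of $\mathcal{E}_{\mathrm{P+D}}(T_\tau)$ from \eqref{eq:crbdt}, solve $T_\tau\rho+1/\sDT=1/\epsilon$, and clip at zero with $(\cdot)^+$; your minimal-feasible-point formulation over $\{T_\tau\ge 0:\mathcal{E}_{\mathrm{P+D}}(T_\tau)\le\epsilon\}$ simply makes the case $\sDT\le\epsilon$ more explicit than the paper's equality-then-clip argument.
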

\begin{IEEEproof}
By \eqref{eq:crbdt}, $\mathcal{E}_{\mathrm{P+D}}(T_{\tau})$ is strictly decreasing
in $T_{\tau}\ge0$. Setting \eqref{eq:crbdt} equal to the target \ac{MSE}, ones gets $\mathcal{E}_{\mathrm{P+D}}(T_{\tau})=\epsilon$,
which in turn reads $T_{\tau}\rho+1/\sDT=1/\epsilon$, hence
$T_{\tau}=(1/\epsilon-1/\sDT)/\rho$. As $T_{\tau} \geq 0$, we get \eqref{eq:npilots}. 
\end{IEEEproof}
Indeed, a direct consequence of \textbf{Corollary \ref{cor:pilots}} is that when the \ac{DT} alone already meets the
target \ac{MSE} $\epsilon$, i.e.\ $\sDT\le\epsilon$, the
right hand side becomes zero, which indicates that no training is needed. In retrospect, if the \ac{DT} does not contain enough information, arising from modeling inaccuracies, i.e. $1/\epsilon >  1/\sDT$, then one would require $T_{\tau}^{\star}(\epsilon,\sDT)$ to reach the desired \ac{MSE}, i.e. $\epsilon$. From \textbf{Corollary \ref{cor:pilots}}, one can get a sense of how much training could be \textit{bypassed}, as a function of the accuracy of the \ac{DT} in question. Next, one may wonder how much pilots does a \ac{DT} replace. Said differently, how many pilots in a \ac{PW} does a \ac{DT} replace. For that, we introduce the following pilot equivalence law for an unbiased \ac{DT},

\begin{lemma}[Pilot-equivalence law of an unbiased \ac{DT}]\label{lem:equiv}
An unbiased \ac{DT} of variance $\sDT$ is worth
\begin{equation}
T_{\mathrm{eq}} = \frac{1}{\rho\,\sDT}
\label{eq:neq}
\end{equation}
training symbols, in the sense that
$\mathcal{E}_{\mathrm{P+D}}(T_{\tau})=1/\big((T_{\tau}+T_{\mathrm{eq}})\rho\big)$, i.e.\
estimating with the \ac{DT} and $T_{\tau}$ training symbols equals estimating with
$T_{\tau}+T_{\mathrm{eq}}$ training symbols and no \ac{DT} at our disposal.
\end{lemma}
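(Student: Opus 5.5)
The plan is to obtain the statement as a direct algebraic rewriting of the \ac{DT}-aided bound in \textbf{Proposition~\ref{prop:crb}}, and then to read the equivalence off the pilot-only \ac{MSE} in \eqref{eq:EPW}. Starting from \eqref{eq:crbdt}, I will factor the \ac{SNR} $\rho>0$ out of the denominator:
\begin{equation*}
\mathcal{E}_{\mathrm{P+D}}(T_{\tau})=\frac{1}{T_{\tau}\rho+1/\sDT}=\frac{1}{\rho\left(T_{\tau}+\frac{1}{\rho\,\sDT}\right)}=\frac{1}{(T_{\tau}+T_{\mathrm{eq}})\rho},
\end{equation*}
with $T_{\mathrm{eq}}$ defined as in \eqref{eq:neq}. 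This gives the claimed identity.

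For the interpretation, I will compare this with \eqref{eq:EPW}. A system with no \ac{DT} and $T'_{\tau}$ training symbols has \ac{MSE} $1/(T'_{\tau}\rho)$. This expression is strictly decreasing in $T'_{\tau}$, so it equals $\mathcal{E}_{\mathrm{P+D}}(T_{\tau})$ if and only if $T'_{\tau}=T_{\tau}+T_{\mathrm{eq}}$. In Fisher-information terms, the \ac{DT} adds $1/\sDT$ to the information $T_{\tau}\rho$ carried by the pilots. Since each pilot carries information $\rho$, the \ac{DT} contributes $1/(\rho\sDT)$ pilots' worth of information, which is exactly $T_{\mathrm{eq}}$.

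There is no real obstacle here, only two points of care. First, $T_{\mathrm{eq}}$ is generally non-integer, so the equivalence must be understood through the \ac{MSE} formula \eqref{eq:EPW} extended to real $T_{\tau}$. Second, the lemma relies on the \ac{DT} being unbiased ($b=0$), since \textbf{Proposition~\ref{prop:crb}} was derived under that assumption. As a sanity check, I will also verify the limits: $T_{\mathrm{eq}}\to0$ as $\sDT\to\infty$ and $T_{\mathrm{eq}}\to\infty$ as $\sDT\to0$, consistent with \eqref{eq:limits}.
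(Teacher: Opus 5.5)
Your proposal is correct and follows essentially the paper's route: the paper also proves the lemma by equating the DT-aided bound \eqref{eq:crbdt} with the pilot-only error \eqref{eq:EPW} evaluated at $T_{\tau}+T_{\mathrm{eq}}$ and solving, which is the same as factoring $\rho$ out of $T_{\tau}\rho+1/\sDT$. Your added remarks on non-integer $T_{\mathrm{eq}}$, the reliance on $b=0$, and the limiting cases are sound.
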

\begin{IEEEproof}
By the \ac{DT}-aided bound in \eqref{eq:crbdt} at $T_\tau$ pilots, we simply equate the two errors in equations \eqref{eq:crbdt} and \eqref{eq:EPW}
\begin{equation*}
\mathcal{E}_{\mathrm{P+D}}(T_{\tau})=\frac{1}{(T_{\tau}+T_{\mathrm{eq}})\rho}
=\mathcal{E}_{\mathrm{PW}}(T_{\tau}+T_{\mathrm{eq}}),
\end{equation*}
With some straightforward manipulations, the proof is done.
\end{IEEEproof}
In other words, if the \ac{SNR} is high enough, then the \ac{DT} may turn out to be less useful in a sense it replaces less pilots. That is to say that the number of pilots a \ac{DT} replaces depends not only on the information contained in the \ac{DT}, i.e. $\frac{1}{\sDT}$, but also on the operating \ac{SNR} in the \ac{PW}.

We now let the \ac{DT} be biased, i.e. $\E\{\hDT\}=h_{0}+b$ with $b\neq 0$ in \eqref{eq:prior}, while the \ac{BLUE} in 
\eqref{eq:shrink} still trusts the \ac{DT} with weight $w$, i.e. the \ac{BLUE} estimator still thinks the \ac{DT} is unbiased. Therefore, we analyze the error in terms of channel estimation on using a biased \ac{DT}.

\begin{proposition}[Biased \ac{DT}]\label{prop:mcrb}
The \ac{MSE} of the \ac{DT}-aided estimator given in \eqref{eq:shrink} under an unknown $b$ is the unbiased bound plus a bias penalty,
\begin{equation}
\begin{aligned}
\mathcal{E}_{\mathrm{P+D}}^b(T_{\tau},b)
&\triangleq \E\big\{|\hPD-h_{0}|^{2}\big\}\\
&= \mathcal{E}_{\mathrm{P+D}}(T_{\tau})
+ w^{2}\,|b|^{2}.
\end{aligned}
\label{eq:mcrb}
\end{equation}
\end{proposition}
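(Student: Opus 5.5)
The plan is to write the estimation error of the \ac{BLUE} in \eqref{eq:shrink} explicitly and then apply a bias--variance decomposition. The weight $w$ is computed from $T_{\tau}$, $\rho$ and $\sDT$ only, because the estimator is unaware of $b$, so $w$ is a deterministic constant with respect to the expectation.

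Using \eqref{eq:ybar}, we write $\bar{y}=h_{0}+\bar{n}$ with $\bar{n}\sim\CN\big(0,1/(T_{\tau}\rho)\big)$. Using \eqref{eq:prior}, we write $\hDT=h_{0}+b+e_{\mathrm{DT}}$. Substituting both into \eqref{eq:shrink}, the $h_{0}$ terms combine with coefficient $(1-w)+w=1$, which gives
\begin{equation*}
\hPD-h_{0}=(1-w)\,\bar{n}+w\,e_{\mathrm{DT}}+w\,b .
\end{equation*}
The first two terms are zero mean. The last term is deterministic.

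Next we expand $|\hPD-h_{0}|^{2}$ and take the expectation over the thermal noise and the \ac{DT} fluctuation. Every cross term involving $w b$ contains a zero-mean factor, $\bar n$ or $e_{\mathrm{DT}}$, and so vanishes. The cross term $\E\{\bar{n}\,e_{\mathrm{DT}}^{*}\}$ vanishes because the pilot noise and the \ac{DT} error are independent, the same independence used in the proof of \textbf{Proposition~\ref{prop:crb}}. The expectation therefore reduces to
\begin{equation*}
(1-w)^{2}\frac{1}{T_{\tau}\rho}+w^{2}\sDT+w^{2}|b|^{2}.
\end{equation*}

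It remains to show that the first two terms equal $\mathcal{E}_{\mathrm{P+D}}(T_{\tau})$ in \eqref{eq:crbdt}. This is the zero-bias computation already carried out in the alternative proof of \textbf{Proposition~\ref{prop:crb}}, and it can also be checked directly. Set $J=T_{\tau}\rho+1/\sDT$. Then $1-w=T_{\tau}\rho/J$ and $w=(1/\sDT)/J$, so
\begin{equation*}
(1-w)^{2}\frac{1}{T_{\tau}\rho}+w^{2}\sDT=\frac{T_{\tau}\rho+1/\sDT}{J^{2}}=\frac{1}{J}.
\end{equation*}
This yields \eqref{eq:mcrb}.

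There is no real obstacle in this argument. The only points that need care are that $w$ stays fixed (nonrandom and independent of $b$) and that $\bar n$ and $e_{\mathrm{DT}}$ are independent. The latter is what removes the noise cross term, and it is also why the result is the exact \ac{MSE} of this particular estimator rather than a bound.
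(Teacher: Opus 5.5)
Your proposal is correct and follows essentially the same route as the paper's proof. You substitute $\bar{y}=h_0+\bar{n}$ and $\hDT=h_0+b+e_{\mathrm{DT}}$ into the combiner, expand the squared error, drop the cross terms using zero mean and independence, and simplify $(1-w)^2/(T_\tau\rho)+w^2\sDT$ to $1/(T_\tau\rho+1/\sDT)$; you are slightly more explicit than the paper about the independence of $\bar{n}$ and $e_{\mathrm{DT}}$ and about that final simplification.
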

\begin{IEEEproof}
Notice that the \ac{BLUE} estimator in \eqref{eq:shrink}
can be expressed as 
\begin{equation*}
	\begin{split}
		\vert \hPD - h_0 \vert ^2 &=\vert(1-w)\bar{y}+w\,\hDT - h_0 \vert^2 \\
		&=\vert (1-w)(h_0 + \bar{n}) +w(h_0 +b + e_{\mathrm{DT}}) - h_0 \vert^2 \\
		&=\vert w\,b+(1-w)\bar{n}+w\,e_{\mathrm{DT}} \vert^2,
	\end{split}
\end{equation*}
where $\bar{n} \triangleq \frac{1}{T_{\tau}}\sum_{i=1}^{T_{\tau}} n_{i} \sim \mathcal{CN}(0, \sigma_n^2/T_\tau)$,
then applying the expectation, we get
\begin{equation*}
\begin{split}
\mathbb{E} \vert \hPD - h_0 \vert ^2
&= w^{2}|b|^{2} + (1-w)^{2}\,\mathbb{E}|\bar{n}|^{2} + w^{2}\,\mathbb{E}|e_{\mathrm{DT}}|^{2} \\
&= w^{2}|b|^{2} + (1-w)^{2}\,\frac{1}{T_{\tau}\rho} + w^{2}\sigma_{\mathrm{DT}}^{2}.
\end{split}
\label{eq:mse-expand}
\end{equation*}
Using $w = \frac{1/\sDT}{\,T_{\tau}\rho + 1/\sDT\,}$ in the above expression gives us \eqref{eq:mcrb}.
\end{IEEEproof}
The biased \ac{DT} reveals several insights. 
First, an uninformative \ac{DT} ($\sDT\to\infty$) behaves as an asymptotically unbiased \ac{DT} in a sense that $\mathcal{E}_{\mathrm{P+D}}^b(T_{\tau},b) \to \mathcal{E}_{\mathrm{P+D}}(T_{\tau})$ as $\sDT\to\infty$. In other words, the bias can be overshadowed by high error variance of the \ac{DT}.
Second, a biased \ac{DT} can drive the \ac{MSE} error  $\mathcal{E}_{\mathrm{P+D}}^b(T_{\tau},b)$ high enough if more weight $w$ is placed on trusting the \ac{DT}. The next proposition gives us the exact bias threshold that, if trespassed, would yield a destructive \ac{DT} in the \ac{MSE} sense.

\begin{proposition}[Negative value of a \ac{DT}]\label{prop:neg}
For the case of a biased \ac{DT} with unknown bias $b$, we have that
$\mathcal{E}_{\mathrm{P+D}}^b>\mathcal{E}_{\mathrm{PW}}$, if and only if the bias exceeds
\begin{equation}
|b^{\star}| = \sqrt{\frac{1}{T_{\tau}\rho}+\sDT} .
\label{eq:bstar}
\end{equation}
\end{proposition}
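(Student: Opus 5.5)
We start from Proposition~\ref{prop:mcrb}, which gives the biased \ac{MSE} as $\mathcal{E}_{\mathrm{P+D}}^b(T_{\tau},b)=\mathcal{E}_{\mathrm{P+D}}(T_{\tau})+w^{2}|b|^{2}$. We compare it with the pilot-only error $\mathcal{E}_{\mathrm{PW}}=1/(T_{\tau}\rho)$ from \eqref{eq:EPW}. Both sides are explicit in $a\triangleq T_{\tau}\rho$ and $c\triangleq 1/\sDT$, and $w$ from \eqref{eq:shrink} becomes $w=c/(a+c)\in(0,1]$. The inequality $\mathcal{E}_{\mathrm{P+D}}^b>\mathcal{E}_{\mathrm{PW}}$ is therefore equivalent to
\begin{equation*}
w^{2}|b|^{2} > \frac{1}{a}-\frac{1}{a+c}=\frac{c}{a(a+c)} .
\end{equation*}
We assume $T_{\tau}\ge1$, so that $\mathcal{E}_{\mathrm{PW}}$ is finite, and $\sDT<\infty$, so that $w>0$.

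Next, we divide by $w^{2}>0$. This keeps the direction of the inequality, so the ``if and only if'' is preserved. We then simplify
\begin{equation*}
\frac{c}{a(a+c)}\cdot\frac{(a+c)^{2}}{c^{2}}=\frac{a+c}{ac}=\frac{1}{c}+\frac{1}{a}=\sDT+\frac{1}{T_{\tau}\rho}.
\end{equation*}
Hence the condition reads $|b|^{2}>\frac{1}{T_{\tau}\rho}+\sDT$, and taking the nonnegative square root gives $|b|>|b^\star|$ with $|b^\star|$ as in \eqref{eq:bstar}. The same chain with equality shows that $|b|=|b^\star|$ is the exact break-even point.

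The main point to get right is the algebraic simplification of the right-hand side: the gap $\mathcal{E}_{\mathrm{PW}}-\mathcal{E}_{\mathrm{P+D}}$ must combine with $1/w^{2}$ into a sum of the two variances. It also helps to note the interpretation: $|b^\star|^{2}=\mathcal{E}_{\mathrm{PW}}+\sDT$, which is the variance of $\bar y-\hDT$. A remark can then handle the degenerate limit $\sDT\to\infty$, where $w\to0$ and no finite bias makes the \ac{DT} harmful, consistent with $|b^\star|\to\infty$.
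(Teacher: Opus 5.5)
Your proof is correct and follows the paper's own argument step for step: reduce $\mathcal{E}_{\mathrm{P+D}}^b>\mathcal{E}_{\mathrm{PW}}$ via \eqref{eq:mcrb} to $w^{2}|b|^{2}>\frac{1}{T_{\tau}\rho}-\frac{1}{T_{\tau}\rho+1/\sDT}$, then substitute $w$ and simplify to $|b|^{2}>\frac{1}{T_{\tau}\rho}+\sDT$. Your additions go beyond what the paper states: the explicit nondegeneracy conditions ($T_\tau\ge 1$, and $\sDT<\infty$ so that $w>0$), and the reading of $|b^\star|^2$ as the variance of $\bar y-\hDT$.
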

\begin{IEEEproof}
By \eqref{eq:mcrb} and \eqref{eq:EPW},
$\mathcal{E}_{\mathrm{P+D}}>\mathcal{E}_{\mathrm{PW}}$ reads
$w^{2}|b|^{2}>\dfrac{1}{T_{\tau}\rho}-\dfrac{1}{T_{\tau}\rho+1/\sDT}
=\dfrac{1/\sDT}{T_{\tau}\rho\,(T_{\tau}\rho+1/\sDT)}$. Using $w$ in \eqref{eq:shrink} rearranges to
$|b|^{2}>\dfrac{T_{\tau}\rho+1/\sDT}{T_{\tau}\rho\,(1/\sDT)}=\dfrac{1}{T_{\tau}\rho}+\sDT = |b^{\star}|^2$,
i.e.\ \eqref{eq:bstar}.
\end{IEEEproof}
It is worth noting that \textbf{Proposition~\ref{prop:neg}} in other words gives us the bias threshold $b^{\star}$ in which \emph{trusting a biased \ac{DT} is strictly worse than ignoring it}, which happens when the unknown bias exceed $b^{\star}$.
In particular, the proposition describes three regimes: for $|b|<|b^{\star}|$
the \ac{DT} \emph{helps}; as $\sDT$ grows the gain \emph{saturates} at the
pilot only error \eqref{eq:pilotmse}; and for $|b|>|b^{\star}|$ the \ac{DT}
\emph{hurts}. 
By \eqref{eq:bstar}, the bias tolerance \emph{shrinks} with more
pilots or high \ac{SNR}, namely $|b^{\star}|^{2}=1/(T_{\tau}\rho)+\sDT \to \sDT$.
On the other hand, with few pilots, or even low \ac{SNR}, a
strongly biased \ac{DT} can still help with channel estimation tasks. However, the above proposition only quantifies the biased \ac{DT} from a channel estimation perspective and does not tell us anything about the required number of pilots needed to attain a desired \ac{MSE}. Note that \textbf{Corollary \ref{cor:pilots}} answers that question for the case of an unbiased \ac{DT}. For the case of a biased \ac{DT} with unknown bias, we have the following generalized corollary

\begin{corollary}[Training for a target with a biased \ac{DT}]\label{cor:pilotsmis}
For a biased \ac{DT} of unknown bias $b$ and variance $\sDT$, the training length to reach a
attain an \ac{MSE} requirement $\epsilon$ is given as 
\begin{equation}
T_{\tau}^{\star}(\epsilon,b)=\frac{1}{\rho}\left[
\frac{1+\sqrt{\,1+4\epsilon\,|b|^{2}/\sigma_{\mathrm{DT}}^{4}\,}}{2\epsilon}
-\frac{1}{\sDT}\right]^{+}.
\label{eq:npilotsmis}
\end{equation}
The pilot length above recovers \textbf{Corollary~\ref{cor:pilots}} at $b=0$ and requires no training,
$T_{\tau}^{\star}=0$, exactly when $\sDT+|b|^{2}\le\epsilon$, i.e.\ the \ac{DT} alone
already meets the target.
\end{corollary}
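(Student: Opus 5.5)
The plan is to start from Proposition~\ref{prop:mcrb} and rewrite the biased \ac{MSE} as a function of the total Fisher information $J \triangleq T_{\tau}\rho + 1/\sDT$, rather than of $T_{\tau}$ directly. By \eqref{eq:crbdt} and \eqref{eq:shrink} we have $\mathcal{E}_{\mathrm{P+D}}(T_{\tau}) = 1/J$ and $w = 1/(\sDT J)$. Hence \eqref{eq:mcrb} becomes
\begin{equation*}
\mathcal{E}_{\mathrm{P+D}}^b = \frac{1}{J} + \frac{|b|^{2}}{\sigma_{\mathrm{DT}}^{4}\,J^{2}} \triangleq g(J),
\end{equation*}
which depends on $T_{\tau}$ only through $J$. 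Since $J$ is affine and increasing in $T_{\tau}$, and $g$ is strictly decreasing on $J>0$ (both terms decrease), $\mathcal{E}_{\mathrm{P+D}}^b$ is strictly decreasing in $T_{\tau}\ge 0$. The minimal training length meeting the target $\epsilon$ is therefore obtained by solving $g(J)=\epsilon$ and then clipping at zero, exactly as in the proof of Corollary~\ref{cor:pilots}.

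Next, multiplying $g(J)=\epsilon$ by $J^{2}$ gives the quadratic
\begin{equation*}
\epsilon J^{2} - J - \frac{|b|^{2}}{\sigma_{\mathrm{DT}}^{4}} = 0 .
\end{equation*}
The product of its roots is $-|b|^{2}/(\epsilon\sigma_{\mathrm{DT}}^{4}) \le 0$, so it has exactly one nonnegative root,
\begin{equation*}
J^{\star} = \frac{1+\sqrt{1+4\epsilon|b|^{2}/\sigma_{\mathrm{DT}}^{4}}}{2\epsilon}.
\end{equation*}
This positive root is the relevant one, because $J>0$ always holds. Substituting $T_{\tau} = (J^{\star}-1/\sDT)/\rho$ and imposing $T_{\tau}\ge 0$ yields \eqref{eq:npilotsmis}. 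The one point needing care is the clipping: if $J^{\star} < 1/\sDT$, then monotonicity of $g$ gives $g(1/\sDT) < \epsilon$, so the target is already met at $T_{\tau}=0$ and the $(\cdot)^{+}$ is justified.

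For the special cases, setting $b=0$ makes the square root equal to $1$. Then $J^{\star}=1/\epsilon$, which recovers \eqref{eq:npilots}. For the no-training condition, $T_{\tau}^{\star}=0$ holds iff $J^{\star}\le 1/\sDT$. Since $g$ is decreasing, this is equivalent to $g(1/\sDT)\le\epsilon$. Evaluating at $J=1/\sDT$ (equivalently $w=1$) gives $g(1/\sDT)=\sDT+|b|^{2}$, which matches $\mathcal{E}_{\mathrm{DT}}$ in \eqref{eq:dwmse}. This yields the stated condition $\sDT+|b|^{2}\le\epsilon$.

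I expect the main obstacle to be expository rather than technical. It lies in spotting the reparametrization by $J$, which turns the rational equation in $T_{\tau}$ into a quadratic, and in cleanly justifying the choice of the positive root together with the $(\cdot)^{+}$ through monotonicity of $g$. The remaining steps are routine algebra.
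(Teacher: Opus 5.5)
Your proposal is correct and follows the paper's proof essentially step for step: the substitution $J=T_{\tau}\rho+1/\sigma_{\mathrm{DT}}^{2}$ (the paper's $x$), the quadratic $\epsilon J^{2}-J-|b|^{2}/\sigma_{\mathrm{DT}}^{4}=0$, its positive root, and clipping at zero. Your use of the strict monotonicity of $g$ together with $g(1/\sigma_{\mathrm{DT}}^{2})=\sigma_{\mathrm{DT}}^{2}+|b|^{2}$ justifies the clipping and the no-training condition more explicitly than the paper, which instead checks that the bracket vanishes at $\epsilon=\sigma_{\mathrm{DT}}^{2}+|b|^{2}$ and then invokes monotonicity in $\epsilon$; one small fix is that at $b=0$ the quadratic has roots $0$ and $1/\epsilon$, so the accurate claim is ``exactly one positive root,'' not ``exactly one nonnegative root.''
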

\begin{IEEEproof}
First, we set $\mathcal{E}_{\mathrm{P+D}}^b(T_{\tau},b)=\epsilon$. Using \eqref{eq:mcrb}, we have that
\begin{equation}
\frac{1}{T_{\tau}\rho+1/\sDT}+w^{2}|b|^{2}=\epsilon,
\label{eq:biased-condition}
\end{equation}
where $
w=\frac{1/\sDT}{T_{\tau}\rho+1/\sDT}$. Denoting $x\triangleq T_{\tau}\rho+1/\sDT>0$ gives $w=(1/\sDT)/x$ and
$w^{2}|b|^{2}=|b|^{2}/(\sigma_{\mathrm{DT}}^{4}x^{2})$, so that the condition in equation \eqref{eq:biased-condition} becomes
$\tfrac{1}{x}+\tfrac{|b|^{2}}{\sigma_{\mathrm{DT}}^{4}x^{2}}=\epsilon$, i.e. quadratic in $x$, namely
\begin{equation*}
\epsilon\,x^{2}-x-\frac{|b|^{2}}{\sigma_{\mathrm{DT}}^{4}}=0 .
\end{equation*}
The unique positive root is easily verified to be
$x^+=\big(1+\sqrt{1+4\epsilon|b|^{2}/\sigma_{\mathrm{DT}}^{4}}\big)/(2\epsilon)$.
Since $T_{\tau}=(x-1/\sDT)/\rho$ and the training length is nonnegative, then replacing $x^+$ in the expression of $T_\tau$ gives us \eqref{eq:npilotsmis}. 
We apply the $[\cdot]^+$ operator so that the training length is non-negative. 
Moreover, it is easy to see that at $\sDT + |b|^{2} = \epsilon$, the expression in \eqref{eq:npilotsmis} is exactly zero. Furthermore, since \eqref{eq:npilotsmis} is decreasing in $\epsilon$ (for fixed $b$ and $\sDT$), then any $\sDT + |b|^{2} \leq \epsilon$ drives the bracketed term to negative, which then attains $T_{\tau}^{\star}(\epsilon,b)=0$.  
\end{IEEEproof}

More specifically, \textbf{Corollary \ref{cor:pilotsmis}} indicates that when the biased \ac{DT} alone already meets the target \ac{MSE} $\epsilon$, i.e.\ $\sDT + |b|^{2} \leq \epsilon$, the right hand side becomes zero, hence zero training is required. It is important to see that the training required as a function of \ac{MSE} requirement $\epsilon$ is dictated not only by the variance $\sDT$ but also by the unknown bias $b$. Also, notice how the bias and variance are treated symmetrically only at the boundary required for \ac{PW} training.

Another insight worth highlighting is the behavior of \eqref{eq:npilotsmis} with $\sDT$. Indeed, differentiating \eqref{eq:npilotsmis} with respect to $\sDT$ gives
\begin{equation}
\begin{split}
	\rho\,\frac{\partial T_{\tau}^{\star}}{\partial \sigma_{\mathrm{DT}}^{2}}
=\frac{1}{\sigma_{\mathrm{DT}}^{4}}-\frac{2|b|^{2}/\sigma_{\mathrm{DT}}^{6}}{\sqrt{1+4\epsilon|b|^{2}/\sigma_{\mathrm{DT}}^{4}}}.
\end{split}
\end{equation}
and setting it to zero gives us
\begin{equation}
\frac{1}{\sigma_{\mathrm{DT}}^{4}}=\frac{2|b|^{2}/\sigma_{\mathrm{DT}}^{6}}{\sqrt{1+4\epsilon|b|^{2}/\sigma_{\mathrm{DT}}^{4}}}
\quad\Longrightarrow\quad
\sqrt{1+\frac{4\epsilon|b|^{2}}{\sigma_{\mathrm{DT}}^{4}}}=\frac{2|b|^{2}}{\sigma_{\mathrm{DT}}^{2}},
\end{equation}
so the optimal \ac{DT} uncertainty is
\begin{equation}
\sigma_{\mathrm{DT}}^{2\star}=2|b|\sqrt{|b|^{2}-\epsilon},\qquad |b|^{2}>\epsilon,
\label{eq:opt-sdt}
\end{equation}
and for $|b|^{2}\leq \epsilon$, the radicand is negative and no interior optimum exists, which means that the minimum is obtained at  $\sigma_{\mathrm{DT}}^{2\star}=0$. Now, we can introduce the pilot equivalence law for the case of a biased \ac{DT}

\begin{corollary}[Pilot-equivalence law of a biased \ac{DT}]\label{cor:equivmis}
A biased \ac{DT} with unknown bias $b$ and variance $\sDT$ is worth
\begin{equation}
T_{\mathrm{eq}}^{b}(T_{\tau},b)=T_{\mathrm{eq}}\cdot
\frac{1+T_{\tau}\rho\,(\sDT-|b|^{2})}{1+T_{\tau}\rho\,\sDT+|b|^{2}/\sDT}
\label{eq:neqmis}
\end{equation}
training symbols when the \ac{PW} is trained on $T_\tau$ training symbols, in the sense that
$\mathcal{E}_{\mathrm{P+D}}(T_{\tau},b)=1/\big((T_{\tau}+T_{\mathrm{eq}}^{b})\rho\big)$.
The expression in \eqref{eq:neqmis} coincides with \textbf{Lemma~\ref{lem:equiv}} at $b=0$,
vanishes at $|b|^{2}=\sDT+1/(T_{\tau}\rho)=|b^{\star}|^{2}$ of \eqref{eq:bstar},
and is \emph{negative} beyond it: a strongly biased twin is worth a negative
number of pilots.
\end{corollary}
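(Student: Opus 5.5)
The plan is to mirror the proof of \textbf{Lemma~\ref{lem:equiv}}, with the unbiased error \eqref{eq:crbdt} replaced by the biased error \eqref{eq:mcrb} of \textbf{Proposition~\ref{prop:mcrb}}. The defining relation $\mathcal{E}^b_{\mathrm{P+D}}(T_\tau,b)=1/\big((T_\tau+T^b_{\mathrm{eq}})\rho\big)$ (the statement's $\mathcal{E}_{\mathrm{P+D}}(T_\tau,b)$ is read as $\mathcal{E}^b_{\mathrm{P+D}}$) is linear in $T^b_{\mathrm{eq}}$ once inverted. It gives
\begin{equation*}
T^b_{\mathrm{eq}}\rho=\frac{1}{\mathcal{E}^b_{\mathrm{P+D}}(T_\tau,b)}-T_\tau\rho .
\end{equation*}
So the whole task is to simplify the right-hand side into the claimed rational form.

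\textbf{Step 1: compact form of the error.} As in the proof of \textbf{Corollary~\ref{cor:pilotsmis}}, I will write $x\triangleq T_\tau\rho+1/\sDT$, so that $w=(1/\sDT)/x$. Then \eqref{eq:mcrb} becomes
\begin{equation*}
\mathcal{E}^b_{\mathrm{P+D}}=\frac{1}{x}+\frac{|b|^2}{\sigma_{\mathrm{DT}}^4x^2},
\end{equation*}
and therefore $1/\mathcal{E}^b_{\mathrm{P+D}}=\sigma_{\mathrm{DT}}^4x^2/(\sigma_{\mathrm{DT}}^4x+|b|^2)$.

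\textbf{Step 2: subtract and simplify.} I substitute $T_\tau\rho=x-1/\sDT$ and put everything over a common denominator. The $\sigma_{\mathrm{DT}}^4x^2$ terms cancel, leaving
\begin{equation*}
T^b_{\mathrm{eq}}\rho=\frac{\sigma_{\mathrm{DT}}^4x+|b|^2-\sDT|b|^2x}{\sDT\,(\sigma_{\mathrm{DT}}^4x+|b|^2)} .
\end{equation*}
Next I use $\sDT x=1+T_\tau\rho\sDT$. The numerator then becomes $\sDT\big(1+T_\tau\rho(\sDT-|b|^2)\big)$. The denominator becomes $\sigma_{\mathrm{DT}}^4\big(1+T_\tau\rho\sDT+|b|^2/\sDT\big)$. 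Dividing by $\rho$ and factoring out $T_{\mathrm{eq}}=1/(\rho\sDT)$ from \eqref{eq:neq} yields \eqref{eq:neqmis}.

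The main obstacle is this bookkeeping: one must keep the factors of $\sDT$ straight so that $T_{\mathrm{eq}}$ factors out cleanly.

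\textbf{Step 3: the three stated properties.}
\begin{itemize}
\item At $b=0$ the ratio in \eqref{eq:neqmis} equals $(1+T_\tau\rho\sDT)/(1+T_\tau\rho\sDT)=1$, which recovers \textbf{Lemma~\ref{lem:equiv}}.
\item The denominator is strictly positive, so the sign of $T^b_{\mathrm{eq}}$ is the sign of $1+T_\tau\rho(\sDT-|b|^2)$. This quantity vanishes exactly when $|b|^2=\sDT+1/(T_\tau\rho)=|b^\star|^2$, matching \eqref{eq:bstar}, and it is negative for larger $|b|$.
\item As a consistency check, the negative sign agrees with \textbf{Proposition~\ref{prop:neg}}. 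Indeed, $T^b_{\mathrm{eq}}<0$ holds iff $\mathcal{E}^b_{\mathrm{P+D}}>1/(T_\tau\rho)=\mathcal{E}_{\mathrm{PW}}$.
\end{itemize}
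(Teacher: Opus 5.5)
Your proposal is correct and follows the paper's own proof: you invert the defining relation to get $T^{b}_{\mathrm{eq}}=1/(\rho\,\mathcal{E}^{b}_{\mathrm{P+D}})-T_\tau$, substitute the biased error of Proposition~\ref{prop:mcrb}, and clear denominators, with the substitution $x=T_\tau\rho+1/\sDT$ only organizing the same algebra (which checks out term by term). Your sign argument via the strictly positive denominator, with zero exactly at $|b|^2=\sDT+1/(T_\tau\rho)$, and your cross-check $T^{b}_{\mathrm{eq}}<0\iff\mathcal{E}^{b}_{\mathrm{P+D}}>\mathcal{E}_{\mathrm{PW}}$ against Proposition~\ref{prop:neg} are also correct.
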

\begin{IEEEproof}
By definition, and according to the above pilot-equivalence law, $T_{\mathrm{eq}}^{b}$ satisfies
$\mathcal{E}_{\mathrm{PW}}(T_{\tau}+T_{\mathrm{eq}}^{b})=\mathcal{E}_{\mathrm{P+D}}(T_{\tau},b)$,
hence by \eqref{eq:pilotmse}
$T_{\mathrm{eq}}^{b}=1/\big(\rho\,\mathcal{E}_{\mathrm{P+D}}(T_{\tau},b)\big)-T_{\tau}$.
Substituting $\mathcal{E}_{\mathrm{P+D}}(T_{\tau},b)$ from \eqref{eq:mcrb} and
clearing denominators, the numerator collapses to
$1+T_{\tau}\rho(\sDT-|b|^{2})$ and the denominator to
$\rho\sDT\big(1+T_{\tau}\rho\sDT+|b|^{2}/\sDT\big)$, which is \eqref{eq:neqmis}.
The numerator vanishes exactly at $|b|^{2}=\sDT+1/(T_{\tau}\rho)$, i.e.\
\eqref{eq:bstar}, and is negative for larger $|b|$.
\end{IEEEproof}

Unlike \textbf{Lemma \ref{lem:equiv}}, the biased \ac{DT} contains a factor, which multiplies $T_{\mathrm{eq}}$ that depends on the bias itself, as well as the amount of training used in the \ac{PW} via $T_\tau$.

However, an important regime to note is the high \ac{SNR} regime, where the biased \ac{DT} is asymptotically worth 
\begin{equation}
T_{\mathrm{eq}}^{b}(T_{\tau},b) \propto T_{\mathrm{eq}}\cdot
\left( 1- \frac{|b|^{2}}{\sDT} \right),
\end{equation}
which indicates that even when operating with high \ac{SNR} in the \ac{PW}, the bias can penalize part of the pilot equivalent number of pilots. 
Generally speaking, \textbf{Corollary \ref{cor:equivmis}} tells us that when $\vert b \vert^2 <  |b^{\star}|^{2}$, the equivalent amount of pilots the \ac{DT} is worth (i.e. the \ac{DT} replaces), namely $T_{\mathrm{eq}}^{b}(T_{\tau},b)$, is positive. At the limit, $\vert b \vert^2 = |b^{\star}|^{2}$, we have that $T_{\mathrm{eq}}^{b}(T_{\tau},b^{\star})=0$ indicating that the \ac{DT} is worth zero pilots. On the other hand, once the bias exceeds $\vert b \vert^2 > |b^{\star}|^{2}$, the equivalent amount of pilots the \ac{DT} replaces becomes negative, i.e. $T_{\mathrm{eq}}^{b}(T_{\tau},b)<0$ indicating that the \ac{DT} starts removing the amount of actual pilots, i.e. $T_\tau + T_{\mathrm{eq}}^{b}(T_{\tau},b) < T_\tau$, namely an \emph{unlearning phenomenon of the \ac{PW}}.

\section{Capacity Analysis of a \ac{DT}-aided Wireless System}\label{sec:cap}
We now translate the estimation results into achievable rate via the training
based capacity lower bound of \cite{hassibi2003howmuch}. Herein, we adopt the standard
block fading model, namely the unit variance channel $h_{0}\sim\CN(0,1)$ is constant
over a coherence interval of $T$ symbols, of which $T_{\tau}$ carry training and
$T-T_{\tau}$ carry data, at common SNR $\rho$. The receiver uses the pilots
$\bar y$ and the \ac{DT} output $\hDT$ into a single channel estimate and decodes
the data with it.
We now introduce analyze the achievable rate a \ac{DT}-aided wireless system delivers. For that, we have the following result
\begin{lemma}[\ac{DT}-aided achievable rate]\label{lem:cap}
For the block fading model above with a unbiased \ac{DT} of precision $1/\sDT$, the
achievable rate is lower bounded by

\begin{equation}
C(T_{\tau};\rho)=\frac{T-T_{\tau}}{T}\,\log_{2} \frac{\left(1+\tfrac{1}{\sDT}+\rho T_{\tau}\right)\left(1+\rho\right)}{1+\tfrac{1}{\sDT}+\rho T_{\tau}+\rho}.
\label{eq:capcf}
\end{equation}
For the biased case, we have 
\begin{equation}
C^{b}(T_{\tau};\rho)=\frac{T-T_{\tau}}{T}\,\log_{2}\frac{1+\rho}{1+\rho\big(\frac{1}{1+\rho T_{\tau}+1/\sDT}+w_b^{2}|b|^{2}\big)},
\label{eq:capcfb}
\end{equation}
where $w_b \triangleq \frac{1/\sDT}{1+\rho T_{\tau}+1/\sDT}$.
\end{lemma}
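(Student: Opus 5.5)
The plan is to follow the Hassibi--Hochwald worst-case-noise argument of \cite{hassibi2003howmuch}. First compute the estimation error of the fused estimate when $h_0\sim\CN(0,1)$ is random. Then treat that error as additional Gaussian noise during the data phase, and lower bound the mutual information per data symbol. Multiplying by the fraction $(T-T_\tau)/T$ of symbols that carry data gives the rate.

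\textbf{Step 1 (estimation error under the Bayesian model).} With $h_0\sim\CN(0,1)$, the receiver holds $\bar y=h_0+\bar n$ with $\bar n\sim\CN(0,1/(T_\tau\rho))$ and $\hDT=h_0+b+e_{\mathrm{DT}}$. These are independent of $h_0$ and of each other. The prior adds Fisher information $1$ to the pilot information $T_\tau\rho$ and the twin information $1/\sDT$.
\begin{itemize}
\item \emph{Unbiased case.} The LMMSE (equivalently MMSE) estimate $\hat h$ has error variance $\sigma_e^2=1/(1+\rho T_\tau+1/\sDT)$. By orthogonality, $\E|\hat h|^2=1-\sigma_e^2$.
\item \emph{Biased case.} The receiver uses the same combiner but is unaware of $b$. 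The twin weight becomes $w_b=(1/\sDT)/(1+\rho T_\tau+1/\sDT)$. Repeating the expansion in the proof of Proposition~\ref{prop:mcrb} with the prior term included gives $\sigma_{e,b}^2=\sigma_e^2+w_b^2|b|^2$.
\end{itemize}

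\textbf{Step 2 (worst-case noise bound).} During data transmission we write $y=\hat h x+\tilde h x+n$, where $\tilde h=h_0-\hat h$. The term $\tilde h x+n$ is uncorrelated with $\hat h x$, so treating it as Gaussian noise of power $\rho\sigma_e^2+1$ is the worst case, by \cite{hassibi2003howmuch}. This gives the lower bound
\[
I\ge \E\log_2\!\Big(1+\frac{\rho|\hat h|^2}{1+\rho\sigma_e^2}\Big).
\]
To reach the closed forms \eqref{eq:capcf}--\eqref{eq:capcfb}, which contain no expectation over $|\hat h|^2$, I would replace the ergodic expression by a simpler bound that matches them. The most natural candidate is to bound the data-phase mutual information by $\log_2(1+\rho)-\log_2(1+\rho\sigma_e^2)$. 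This is the rate of a genie receiver minus the penalty from the residual estimation error, and it is exactly $\log_2\frac{1+\rho}{1+\rho\sigma_e^2}$.

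\textbf{Step 3 (algebra).} Substituting $\sigma_e^2$ into $\log_2\frac{1+\rho}{1+\rho\sigma_e^2}$ gives $\frac{(1+\rho)(1+1/\sDT+\rho T_\tau)}{1+1/\sDT+\rho T_\tau+\rho}$, which is \eqref{eq:capcf}. Substituting $\sigma_{e,b}^2$ instead gives \eqref{eq:capcfb} directly.

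\textbf{Main obstacle.} The hardest step is justifying the bound $I\ge\log_2(1+\rho)-\log_2(1+\rho\sigma_e^2)$ rigorously. I would first try the chain rule: $I(x;y,\hat h)=I(x;y\mid h_0,\hat h)-I(x;h_0\mid y,\hat h)$. The first term is at least $\E\log_2(1+\rho|h_0|^2)$. The second term is bounded by $\log_2(1+\rho\sigma_e^2)$, using the conditional entropy of $\tilde h$ together with a Gaussian maximum-entropy argument.

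Two gaps remain with this route. The first term $\E\log_2(1+\rho|h_0|^2)$ falls below $\log_2(1+\rho)$ by Jensen's inequality, so the result must either be read as Jensen-approximated or the lemma restated with an expectation. I would state this caveat explicitly rather than hide it. Separately, in the biased case the error is not zero-mean given the observations, so I would argue that the mismatched-decoding (worst-case noise) bound still holds with the total error power $\sigma_{e,b}^2$ in place of $\sigma_e^2$.
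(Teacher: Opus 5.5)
Your Steps 1 and 3 match the paper: the error variances are those of \eqref{eq:se2} and \eqref{eq:se2b}, and the algebra is right. The ergodic bound at the start of your Step 2 is exactly the Hassibi--Hochwald bound the paper invokes.

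The paper takes no chain-rule detour. It evaluates the SNR at the mean useful power, $\rho_{\mathrm{eff}}=\rho(1-\sigma_e^2)/(1+\rho\sigma_e^2)$ with $\E|\hat h|^2=1-\sigma_e^2$. Since $1+\rho_{\mathrm{eff}}=(1+\rho)/(1+\rho\sigma_e^2)$, your ``genie minus penalty'' expression is the same number. But moving $\E|\hat h|^2$ inside the concave logarithm is Jensen in the wrong direction, just like your $\E\log_2(1+\rho|h_0|^2)\le\log_2(1+\rho)$. The chain rule only relocates the gap from $|\hat h|^2$ to $|h_0|^2$.

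So the gap you flag is genuine, and it is present, unflagged, in the paper's proof. It cannot be closed by any route, because \eqref{eq:capcf} is not a valid lower bound. At $T_\tau=0$ and $\sDT\to0$ it tends to $\log_2(1+\rho)$. Yet no scheme with only receiver-side information can beat the perfect-CSIR capacity $\E\log_2(1+\rho|h_0|^2)$, which is strictly below $\log_2(1+\rho)$ for $h_0\sim\CN(0,1)$. Hence for small $\sDT$ the claimed bound exceeds capacity.

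What is provable keeps the expectation: either $\frac{T-T_\tau}{T}\E\log_2\bigl(1+\rho|\hat h|^2/(1+\rho\sigma_e^2)\bigr)$, or your $\frac{T-T_\tau}{T}\bigl[\E\log_2(1+\rho|h_0|^2)-\log_2(1+\rho\sigma_e^2)\bigr]$. Equation \eqref{eq:capcf} should be read as their Jensen approximation, as you suggest.

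The biased case needs more than your closing remark. Because $b$ is deterministic, the biased $\hat h$ is a fixed shift of the unbiased one, so $I(x;y,\hat h)$ does not depend on $b$ at all. Your chain-rule route uses only the true conditional law of $h_0$ given $\hat h$, which is still Gaussian with variance $\sigma_e^2$. It therefore produces no $w_b^2|b|^2$ penalty.

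Nor does the worst-case-noise theorem apply with $\hat h$ as the known gain. Here $\E\{\tilde h\mid\hat h\}=-w_b b\neq0$, so $\sqrt{\rho}\,\tilde h\,x+w$ is correlated with $x$ given $\hat h$. A bias penalty can only describe a bias-unaware, mismatched decoder, and that calls for a GMI-type analysis.

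The paper simply substitutes $\sigma_{e,b}^2$ into $\rho_{\mathrm{eff}}$. This implicitly takes $1-\sigma_{e,b}^2$ as the useful power, although in the biased case $\E|\hat h|^2=1-\sigma_e^2+w_b^2|b|^2$. So \eqref{eq:capcfb} is a heuristic, and neither your proposal nor the paper's proof establishes it as a lower bound.
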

\begin{IEEEproof}
We assume that the data phase is conducted after the system acquires a channel estimate using both the \ac{PW} observation of $h_{0}$ via the pilot average $\bar y=h_{0}+n$,
$n\sim\CN(0,(\rho T_{\tau})^{-1})$ from \eqref{eq:ybar}, and the unbiased \ac{DT} from \eqref{eq:prior} and that $n$, $e$ and the
prior $h_{0}\sim\CN(0,1)$ are mutually independent.
Furthermore, we stack them as
$\mathbf{r}=[\,\bar y,\ \hDT\,]^{\top}=\mathbf{1}\,h_{0}+\mathbf{v}$, with
$\mathbf{1}=[1,1]^{\top}$ and $\mathbf{v}\sim\CN(\mathbf{0},\mathbf{R})$,
$\mathbf{R}=\mathrm{diag} \big((\rho T_{\tau})^{-1},\,\sDT\big)$, the \ac{MMSE}
estimate of $h_{0}$ is the conditional mean
\begin{equation}
\hat h=\E\{h_{0}\mid\bar y,\hDT\}
      =\frac{\rho T_{\tau}\,\bar y+(1/\sDT)\,\hDT}{1+\rho T_{\tau}+1/\sDT},
\label{eq:cmb}
\end{equation}

Since $h_{0}$ and $\mathbf{v}$ are jointly Gaussian, the posterior $p(h_{0}\mid\mathbf{r})$ is Gaussian, and gathering the quadratic in $h_{0}$ from the product of the likelihood $\exp[-(\mathbf{r}-\mathbf{1}h_{0})^{H}\mathbf{R}^{-1}(\mathbf{r}-\mathbf{1}h_{0})]$ and the prior $\exp[-|h_{0}|^{2}]$ yields
\begin{equation}
p(h_{0}\mid\mathbf{r})\ \propto\ \exp \big[-u\,|h_{0}-\hat h|^{2}\big],
\label{eq:posterior}
\end{equation}
so the posterior is $\mathcal{CN}(\hat h,\,(1+\mathbf{1}^{H}\mathbf{R}^{-1}\mathbf{1})^{-1})$ with $\hat h$ the conditional mean of \eqref{eq:cmb}. The inverse of the variance $1+\mathbf{1}^{H}\mathbf{R}^{-1}\mathbf{1}$ is the so-called posterior precision in information form, namely the prior precision added to the measurement precision $\mathbf{1}^{H}\mathbf{R}^{-1}\mathbf{1}$ that the two observations carry about $h_{0}$. The \ac{MMSE} equals this posterior variance, and it is independent of the realization $\mathbf{r}$, so the error $\tilde h=h_{0}-\hat h$ has variance
\begin{equation}
\sigma_{e}^{2}=\big(1+\mathbf{1}^{H}\mathbf{R}^{-1}\mathbf{1}\big)^{-1}
             =\frac{1}{1+\rho T_{\tau}+1/\sDT},
\label{eq:se2}
\end{equation}
and, by the orthogonality principle of the \ac{MMSE} between its estimate and error, namely $\hat h\perp\tilde h$ with $\E|\hat h|^{2}=1-\sigma_{e}^{2}$.
Conditioning the data phase on $\hat h$ (a sufficient statistic for $h_{0}$) is
thus conditioning on both $\bar y$ and $\hDT$.
Furthermore, with a unit power data symbol $x$, the receiver \ac{UE} reads
\begin{equation*}
y=\sqrt{\rho}\,h_{0}\,x+w
 =\sqrt{\rho}\,\hat h\,x+\big(\sqrt{\rho}\,\tilde h\,x+w\big),
\end{equation*}
where the noise behaves as
$w\sim\CN(0,1)$. Denote $z \triangleq \sqrt{\rho}\,\tilde h\,x+w$, then 
the effective noise $z$ is uncorrelated from $x$ because
$\E\{z\,x^{*}\}=\sqrt{\rho}\,\E\{\tilde h\}\,\E|x|^{2}+\E\{w\,x^{*}\}=0$ ($\tilde h$
is zero mean and independent of $x$, and $w$ is independent), with power
$\E|z|^{2}=1+\rho\sigma_{e}^{2}$, while the useful signal power is
$\rho\,\E|\hat h|^{2}=\rho(1-\sigma_{e}^{2})$.
Conditioned on $\hat h$, the input sees a known
gain and additive noise uncorrelated with it at fixed variance; by the worst
case uncorrelated noise theorem \cite[Thm.~1]{hassibi2003howmuch} a Gaussian $z$
minimizes the mutual information, hence
\begin{equation}
I(x;y\mid\hat h) \ge \log_{2} \big(1+\rho_{\mathrm{eff}}\big),
\label{eq:hhcap}
\end{equation}
where
\begin{equation}
	\rho_{\mathrm{eff}}=\frac{\rho\,(1-\sigma_{e}^{2})}{1+\rho\,\sigma_{e}^{2}}.
	\label{eq:rho_eff}
\end{equation} 
Note that only $T-T_{\tau}$ of every $T$ symbols
carry data, hence
\begin{equation}
	C(T_{\tau})=\tfrac{T-T_{\tau}}{T}\log_{2}(1+\rho_{\mathrm{eff}}).
	\label{eq:C_tau}
\end{equation}
Using \eqref{eq:se2} in \eqref{eq:rho_eff} followed by \eqref{eq:C_tau} gives \eqref{eq:capcf}.

For the biased case in \eqref{eq:prior}, the same \ac{MMSE} estimator in \eqref{eq:cmb} gives,
\begin{equation*}
h_{0}-\hat h=\frac{h_{0}-\rho T_{\tau}\bar n-(1/\sDT)e_{\mathrm{DT}}}{1+\rho T_{\tau}+1/\sDT}-\frac{1/\sDT}{1+\rho T_{\tau}+1/\sDT}\,b,
\end{equation*}
Hence, we can show for the biased case that 
\begin{equation}
\sigma_{e,b}^{2}= \mathbb{E} \vert h_{0}-\hat h \vert^2 = \sigma_{e}^{2}+w_b^{2}|b|^{2},
\label{eq:se2b}
\end{equation}
where $w_b \triangleq \frac{1/\sDT}{1+\rho T_{\tau}+1/\sDT}$. Substituting \eqref{eq:se2b} for $\sigma_{e}^{2}$ in \eqref{eq:rho_eff}-\eqref{eq:C_tau} yields the biased rate in \eqref{eq:capcfb}.

\end{IEEEproof}
The uninformative \ac{DT} ($1/\sDT\to 0$) recovers
the pure training rate of \cite{hassibi2003howmuch}. On the other hand, a perfect \ac{DT}
($1/\sDT \to \infty$) drives the estimation error as $\sigma_{e}^{2} \to 0$, which in turn drives the effective \ac{SNR} towards the true one, namely $\rho_{\mathrm{eff}} \to \rho$,
and with $T_{\tau} \to 0$, $C \to \log_{2}(1+\rho)$ gives the genie rate. Next, we aim at answering the fundamental question of \emph{how much training is needed with a \ac{DT}} ?

\begin{theorem}[Optimal training length]\label{thm:opttrb}
For a biased \ac{DT} with unknown bias $b$, the rate \eqref{eq:capcfb} is maximized at
$T_{\tau}^{\star}=(u^{\star}-a)/\rho$, where $u^{\star}\in[a,\,a+\rho T]$ is the unique root of
\begin{equation}
\ln \frac{(1+\rho)\,u^{2}}{D(u)}
=(\rho T+a-u) \left(\frac{2}{u}-\frac{2u+\rho}{D(u)}\right),
\label{eq:optb}
\end{equation}
with $a\triangleq 1+\tfrac{1}{\sDT}$ and $D(u)\triangleq u^{2}+\rho u+\rho\,|b|^{2}/\sigma_{\mathrm{DT}}^{4}$.
\end{theorem}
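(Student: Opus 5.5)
The plan is to reduce \eqref{eq:capcfb} to a one-variable concave maximization through the substitution $u\triangleq a+\rho T_{\tau}=1+1/\sDT+\rho T_{\tau}$, with $T_{\tau}$ relaxed to a continuous variable in $[0,T]$, so that $u$ ranges over $[a,a+\rho T]$.

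\emph{Step 1 (rewriting the rate).} By \eqref{eq:se2} and the definition of $w_b$ in Lemma~\ref{lem:cap}, we have $\sigma_e^2=1/u$ and $w_b=(1/\sDT)/u$. Write $\beta\triangleq|b|^2/\sigma_{\mathrm{DT}}^4$. Then $\sigma_{e,b}^2=(u+\beta)/u^2$, and the argument of the logarithm in \eqref{eq:capcfb} becomes
\begin{equation*}
\frac{1+\rho}{1+\rho(u+\beta)/u^2}=\frac{(1+\rho)u^2}{D(u)} .
\end{equation*}
The prefactor is $(T-T_{\tau})/T=(\rho T+a-u)/(\rho T)$. Hence maximizing $C^b$ over $T_{\tau}$ is equivalent to maximizing
\begin{equation*}
f(u)=(\rho T+a-u)\,g(u),\qquad g(u)\triangleq\ln\frac{(1+\rho)u^2}{D(u)} .
\end{equation*}
Here the change from $\log_2$ to $\ln$ and the constant $1/(\rho T)$ do not affect the maximizer.

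\emph{Step 2 (stationarity).} Differentiating gives $f'(u)=-g(u)+(\rho T+a-u)g'(u)$, with $g'(u)=2/u-(2u+\rho)/D(u)$ since $D'(u)=2u+\rho$. Setting $f'(u)=0$ reproduces \eqref{eq:optb} exactly, and $T_{\tau}^\star=(u^\star-a)/\rho$ follows by inverting the substitution.

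\emph{Step 3 (uniqueness, the main point).} We will show that $f$ is strictly concave on the interval. First simplify
\begin{equation*}
g'(u)=\frac{2D-u(2u+\rho)}{uD}=\frac{\rho(u+2\beta)}{uD(u)}>0,
\end{equation*}
so $g$ is increasing. Next, the derivative of $(u+2\beta)/(uD)$ has numerator
\begin{equation*}
uD-(u+2\beta)\big(D+u(2u+\rho)\big),
\end{equation*}
which is negative because $u+2\beta\ge u$ and $D+u(2u+\rho)>D$. Hence $g''<0$. Since $\rho T+a-u\ge 0$ on the interval,
\begin{equation*}
f''(u)=-2g'(u)+(\rho T+a-u)g''(u)<0 .
\end{equation*}
Thus $f'$ is strictly decreasing, and it has at most one zero.

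\emph{Step 4 (existence and endpoint conditions).} At $u=a+\rho T$ we get $f'=-g(a+\rho T)$. This is negative whenever the rate at full training is positive, that is, whenever $u^2-u-\beta>0$, which we expect to be the main obstacle to state cleanly. At $u=a$ we have $f'(a)=-g(a)+\rho T g'(a)$. If $f'(a)>0$, the intermediate value theorem yields the unique interior root $u^\star$. If $f'(a)\le 0$, the maximizer sits at the boundary $u^\star=a$, i.e.\ $T_{\tau}^\star=0$, which is the no-training regime. I would therefore state the theorem's root as lying in the closed interval, with this boundary case understood as the no-training regime identified in the introduction.
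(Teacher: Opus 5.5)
Your proposal is correct and follows the paper's proof. It uses the same substitution $u=a+\rho T_{\tau}$, the same rewriting of \eqref{eq:capcfb} as a positive multiple of $(\rho T+a-u)\ln\frac{(1+\rho)u^{2}}{D(u)}$, and the same stationarity condition \eqref{eq:optb}. The paper only asserts that this objective is unimodal, whereas your Step~3 actually proves it: $g'=\rho(u+2\beta)/(uD)>0$ and $g''<0$ give $f''<0$.

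Your Step~4 caveat is also warranted. The case $f'(a)<0$ does occur, e.g.\ for an unbiased twin with small $\sDT$; then no root lies in $[a,a+\rho T]$ and $T_{\tau}^{\star}=0$. The case you leave open, $g(a+\rho T)\le 0$, closes immediately: since $g$ is increasing, $C^{b}\le 0$ on the whole interval, and the maximizer degenerates to $T_{\tau}=T$.
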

\begin{IEEEproof}
Let $u\triangleq a+\rho T_{\tau}$, a monotone reparametrization of $T_{\tau}$ on
$[\,a,\,a+\rho T\,]$, so that $T_{\tau}=(u-a)/\rho$ and $\tfrac{T-T_{\tau}}{T}=\tfrac{\rho T+a-u}{\rho T}$.
With the biased error \eqref{eq:se2b}, one has $1+\rho\sigma_{e,b}^{2}=D(u)/u^{2}$, hence $1+\rho_{\mathrm{eff}}=\dfrac{(1+\rho)\,u^{2}}{D(u)}$ and
\begin{equation}
C^{b}(T_{\tau})=\frac{1}{\rho T\ln 2}\,(\rho T+a-u)\,\ln \frac{(1+\rho)\,u^{2}}{D(u)}.
\end{equation}
The factor $1/(\rho T\ln 2)$ is a positive constant in $u$, so maximizing $C^{b}$ is equivalent to
maximizing $g(u)\triangleq(\rho T+a-u)\,\ln\frac{(1+\rho)u^{2}}{D(u)}$. Using
$\frac{d}{du}\ln\frac{(1+\rho)u^{2}}{D(u)}=\frac{2}{u}-\frac{2u+\rho}{D(u)}$, i.e.
\begin{equation}
g'(u)=-\ln \frac{(1+\rho)\,u^{2}}{D(u)}
+(\rho T+a-u) \left(\frac{2}{u}-\frac{2u+\rho}{D(u)}\right),
\end{equation}
and setting $g'(u)=0$ yields \eqref{eq:optb}. The objective is unimodal on $[\,a,\,a+\rho T\,]$,
so the root $u^{\star}$ is unique and $T_{\tau}^{\star}=(u^{\star}-a)/\rho$.
\end{IEEEproof}
Even though \textbf{Lemma \ref{lem:cap}} quantifies the rate for a \ac{DT}-aided wireless system, one can also be interested in the relative gain the \ac{DT}-aided system can give in terms of rate when compared to a pilots-only case, where no \ac{DT} is available. In essence, the advantage a \ac{DT} brings in terms of rate can be expressed as
\begin{equation}
\Delta C(\rho,\sDT,b) \triangleq 
\underbrace{\max_{0\le T_{\tau}\le T} C^{b} \big(T_{\tau};\rho\big)}_{\text{\ac{DT}-aided}}
-
\underbrace{\max_{0\le T_{\tau}\le T} C \big(T_{\tau};\rho\big)}_{\text{pilots only}},
\label{eq:advantage}
\end{equation}
where $C^{b}$ is the biased \ac{DT}-conditioned rate \eqref{eq:capcfb} and the second term is
\eqref{eq:capcf} with $\sDT \to \infty$.
Indeed, \eqref{eq:advantage} measures a \ac{DT} rate advantage relative to having no \ac{DT}.
It is worth noting that $\Delta C(\rho) \to 0$ as $\rho \to \infty$, which means that the advantage is
maximized at a finite \ac{SNR}, where a \ac{DT} is most valuable when training is expensive, and is redundant as \ac{SNR} becomes higher and higher.

\section{Simulation Results}\label{sec:sim}

\begin{figure}[!t]
\centering
\includegraphics[width=\columnwidth]{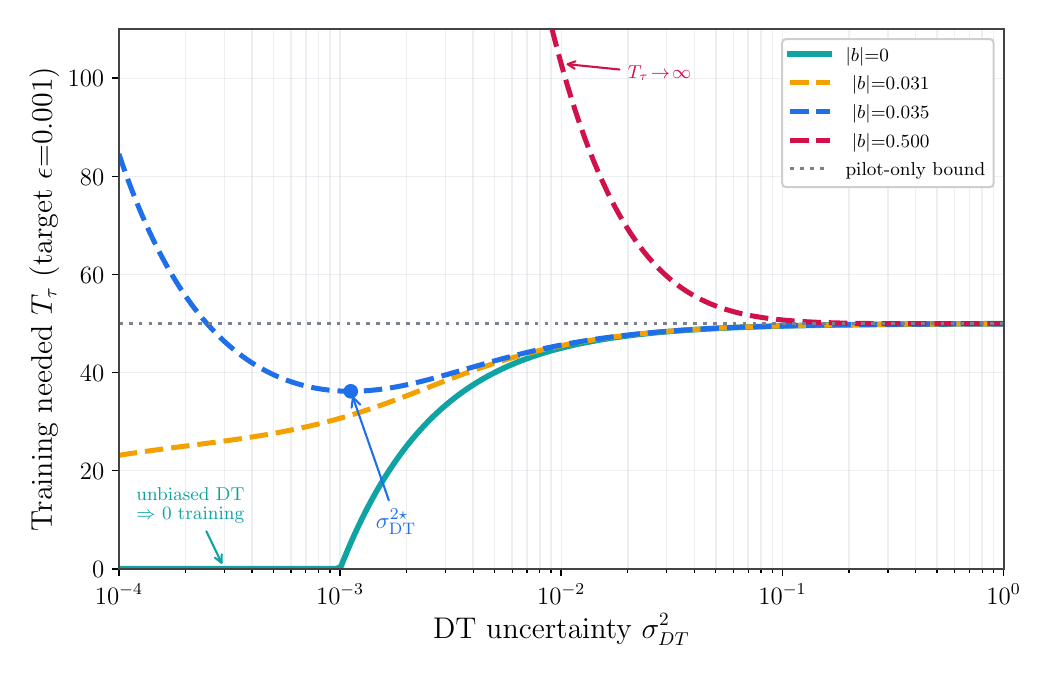}
\caption{Training length required in the \ac{PW} to reach a target \ac{MSE} versus the \ac{DT} uncertainty $\sDT$, for an unbiased and for several biased \acp{DT}.}
\label{fig:npilots}
\end{figure}

\begin{figure}[!t]
\centering
\includegraphics[width=\columnwidth]{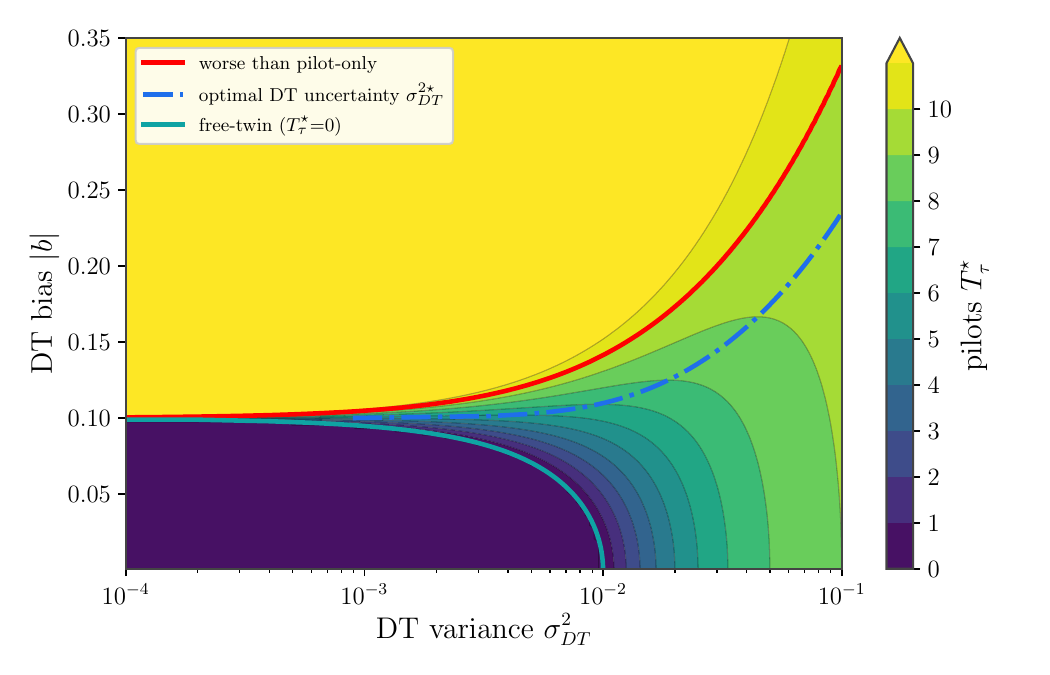}
\caption{Training length required in the \ac{PW} to reach a target \ac{MSE}, over the plane of \ac{DT} uncertainty and \ac{DT} bias.}
\label{fig:phase_diagram}
\end{figure}

\begin{figure}[!t]
\centering
\includegraphics[width=\columnwidth]{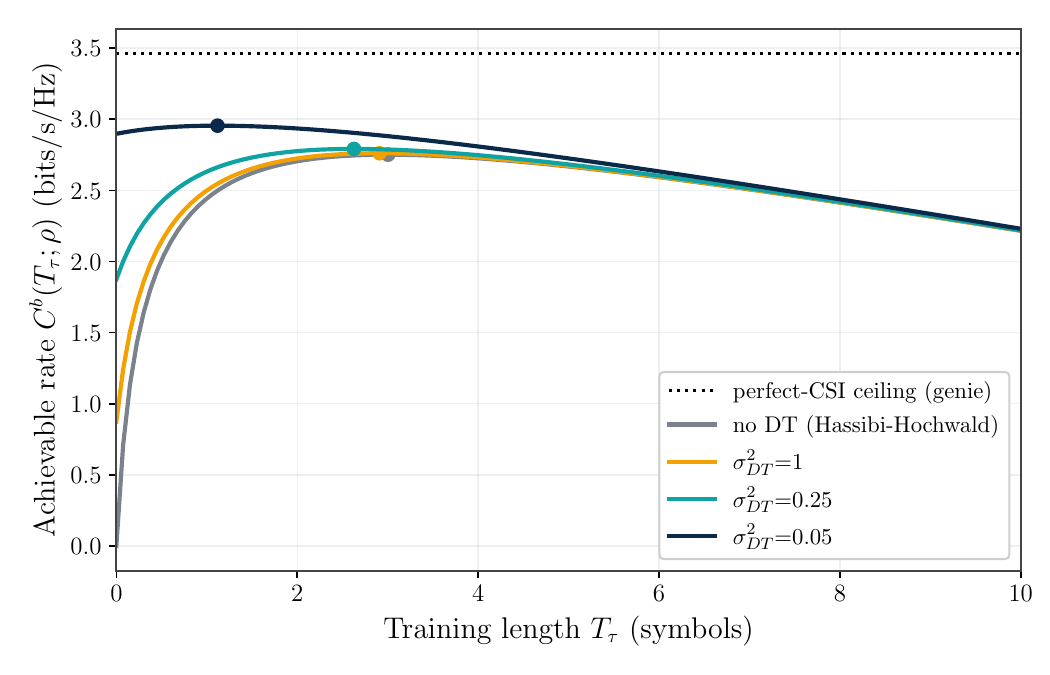}
\caption{Achievable rate of a \ac{DT}-aided system against the training length $T_{\tau}$ for several \ac{DT} variances, where the markers give the rate optimal training length.}
\label{fig:cap_train}
\end{figure}

\begin{figure}[!t]
\centering
\includegraphics[width=\columnwidth]{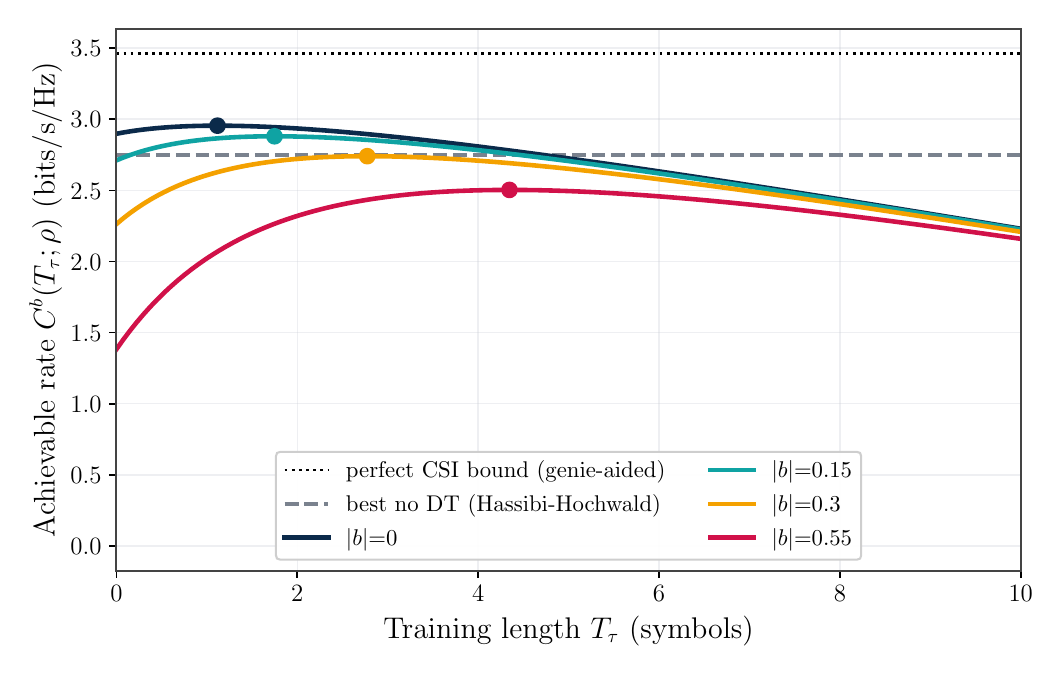}
\caption{Achievable rate of a \ac{DT}-aided system against the training length $T_{\tau}$ for several \ac{DT} biases at a fixed \ac{DT} variance of $\sDT = 0.05$.}
\label{fig:cap_train2}
\end{figure}

\begin{figure}[!t]
\centering
\includegraphics[width=\columnwidth]{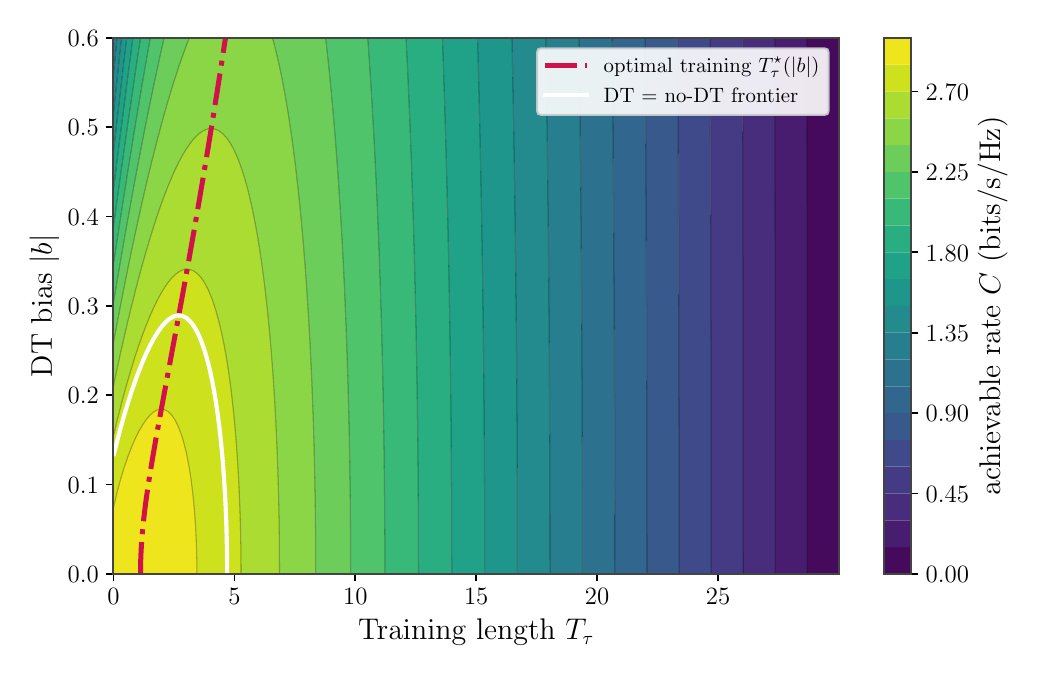}
\caption{Achievable rate of a \ac{DT}-aided system over the plane of training length and \ac{DT} bias.}
\label{fig:phase_C}
\end{figure}

\begin{figure}[!t]
\centering
\includegraphics[width=\columnwidth]{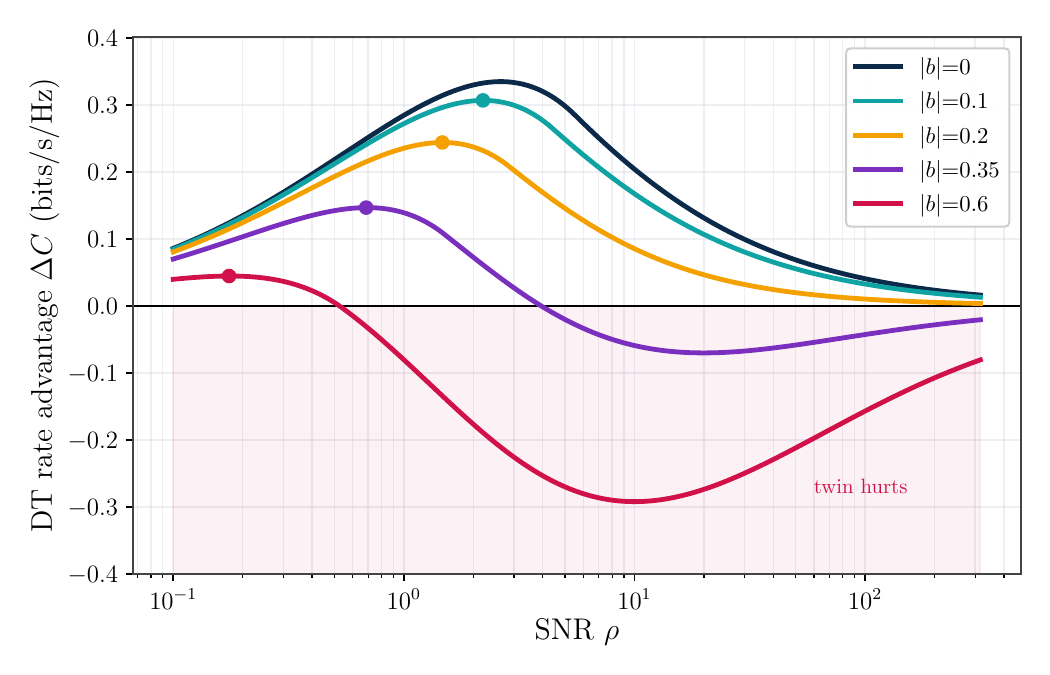}
\caption{Rate advantage of an optimally trained \ac{DT}-aided system over the optimally trained pilot-only system, against the operating \ac{SNR} $\rho$ for several \ac{DT} bias levels, with the markers giving the peak of each curve.}
\label{fig:fig12}
\end{figure}

We now validate the analysis and illustrate the corresponding operating regimes for the \ac{DT}-aided wireless system. 
The results are organized in two parts. We first study the channel estimation \ac{MSE}, i.e. the \ac{DT}-aided bound, the training length required to reach a target accuracy \eqref{eq:npilots}, the pilot equivalence of a \ac{DT}, and the mismatch thresholds beyond which a biased \ac{DT} is detrimental. The simulations depict quantitatively how much training is expected in a \ac{DT}-aided system as opposed to a no-\ac{DT} aided system. 
We then translate \ac{DT}-aided \ac{MSE} gains whenever suitable into achievable rate through the \ac{DT}-aided achievable rate of Lemma \ref{lem:cap} and plot the optimal training length of Theorem~\ref{thm:opttrb} together with the regimes in which a \ac{DT} provides gains and decreases the achievable rate.\\

In Fig.~\ref{fig:npilots}, we plot the required number of pilots needed to attain a given channel estimation \ac{MSE}.
In this plot, we set $(\epsilon,\rho) = (10^{-2},10 \ \mathrm{dB})$.
The \emph{"pilot-only"} bound represents a case whereby no \ac{DT} is available, hence training only happens in the \ac{PW} without being aided by a \ac{DT}, and therefore one would need to satisfy $\epsilon = \mathcal{E}_{\mathrm{PW}}$.
In the high uncertainty regime, namely as $\sDT$ grows large, we see that regardless of the bias, that all regimes converge towards the pilot-only bound, which indicates that the number of pilots required to attain a given channel estimation \ac{MSE} is dictated by the \ac{PW}. 
When the \ac{DT} is highly biased, e.g. $\vert b \vert = 0.5$, we see that the \ac{DT} always requires more pilots than the case of pilot-only. In addition, as the \ac{DT} becomes more descriptive (more certain) of the \ac{PW}, we see that the \ac{DT} would require more and more pilots to attain the required target channel estimation \ac{MSE}. For instance, if the $\sDT = 10^{-2}$, the \ac{DT} requires double the amount of pilots compared to the no-pilot case, when the bias is $\vert b \vert = 0.5$ and increases rapidly as the \ac{DT} becomes more certain.
Another biased case worth noting is when $\vert b \vert = 0.035$ (case where $\vert b \vert^2 > \epsilon$) where the minimum number of pilots required to attain a given channel estimation \ac{MSE} is a non-trivial point around $\sigma_{\mathrm{DT}}^{2\star} = 1.05 \times 10^{-3}$ (obtained by \eqref{eq:opt-sdt}) of \eqref{eq:opt-sdt} suggesting when the \ac{DT} error decreases beyond $\sDT \leq \sigma_{\mathrm{DT}}^{2\star}$, more weight $w$ is being placed on the \ac{DT} which would require more training is required to learn and undo such bias. 
When the bias goes below $\vert b \vert^2 \leq \epsilon$, then the optimal \ac{DT} uncertainty minimizing the required number of pilots is zero as explained below \eqref{eq:opt-sdt}. In that case, the bias is small enough that the required number of training needed simply increases with uncertainty to reach the pilot-only bound. Indeed, in such case, the \ac{DT} still bring benefits. For instance, at $\sDT = 10^{-3}$, the training time required is about $30$ slots, as opposed to $50$ in the pilots-only bound. 
For the unbiased \ac{DT}, as long as the target channel estimation \ac{MSE} is attained, namely $\sDT \leq \epsilon$, no training is required, otherwise, as the \ac{DT} becomes more uncertain the required number of pilots increases steadily to attain the number of pilots required for the pilot-only case.
We see that an accurate unbiased \ac{DT} requires no training and a biased one admits a best uncertainty beyond which more training is needed to due to its bias.\\

In Fig.~\ref{fig:phase_diagram}, we plot an operating-region map on the number of pilots required to attain a required target channel estimation \ac{MSE} $\epsilon$ at an operating \ac{SNR} for different values of bias and \ac{DT} uncertainty. In this plot, we set $(\epsilon,\rho) = (10^{-3},20 \mathrm{ dB})$. 
The boundary labeled as \emph{free-twin} is the boundary characterized by $\sDT+|b|^{2}\le\epsilon$. Under this condition, the \ac{DT} already meets the required target channel estimation \ac{MSE} $\epsilon = 10^{-3}$ and hence no training in the \ac{PW} is required. 
Another boundary worth noting is the \emph{optimal DT uncertainty} which arises when $\vert b \vert > \sDT$. It is the boundary upon which the number of pilots required for a given bias, dictated by \eqref{eq:opt-sdt}, is minimum.
In addition, the \emph{worse than pilot-only} boundary is dictated by \textbf{Proposition \ref{prop:neg}} and tells us that any $(b,\sDT)$ above that boundary means that incorporating the \ac{DT} with the \ac{PW} will give a worse \ac{MSE}, in the sense of $\mathcal{E}_{\mathrm{P+D}}^b>\mathcal{E}_{\mathrm{PW}}$.
In essence, we conclude that no training is needed below the free-twin boundary, whereas above the "worse than pilot-only" boundary the \ac{DT} costs more training than using no \ac{DT} at all, and the dash dotted locus gives the best \ac{DT} uncertainty for any given bias.\\

In Fig.~\ref{fig:cap_train}, we plot the achievable rate obtained for different training lengths, averaged over channel realizations for a fixed $T = 30$ and fixed \ac{SNR} set to $\rho = 10 \ \mathrm{dB}$. 
The figure depicts the fundamental tradeoff in wireless communications, where transmitting more pilots improves the channel estimate but cannibalizes the time available to actually send data, hence creating an optimal operation point for training length for maximum achievable rate. 
For a traditional system operating without a \ac{DT} (labeled as \emph{Hassibi-Hochwald} bound \cite{hassibi2003howmuch}), we observe that a significant chunk of the coherence block must be wasted on training just to reach optimal capacity. In particular, the optimal training length is $T_\tau \simeq 3$ symbols to maximize the achievable rate, which in this case attains about $2.75 \ \mathrm{bits/s/Hz}$.
However, as the \ac{DT}'s information becomes more precise (smaller $\sDT$), not only reduces the optimal training length, but also increases the achievable rate. For instance, when $\sDT = 0.05$, the optimal training length is about $T_\tau \simeq 1$ symbols, and achieves a capacity of $3 \ \mathrm{bits/s/Hz}$.
This means that an accurate \ac{DT} inherently reduces the physical overhead required for channel estimation, freeing up more symbols for data payload, which then leaves positive reverberation effects on the achievable rate. In other words, a more accurate \ac{DT} shortens the optimal training and increases the achievable rate towards the perfect-\ac{CSI} case, whereas an uninformative \ac{DT} coincides with the pilot only rate of Hassibi and Hochwald \cite{hassibi2003howmuch}.\\

In Fig.~\ref{fig:cap_train2}, similar to Fig.~\ref{fig:cap_train}, we plot again the achievable rate vs training lengths for the same simulation parameters, but we fix the uncertainty to $\sDT = 0.05$. The purpose is to illustrate the destructive impact of a biased \ac{DT} on both system capacity and training overhead.
In particular, Unlike an unbiased \ac{DT}, a \ac{DT} that is confident but biased actively misleads the channel estimator, which then forces the system to transmit significantly more pilots just to \emph{"out-vote"} the bad prior.
In fact, if the bias becomes very high, e.g. $\vert b \vert = 0.55$ the \ac{DT} becomes harmful to the system, not only from training length required (about $4$ symbols in this case), but also the achievable rate is reduced to $2.5 \ \mathrm{bits/s/Hz}$, as opposed to the case of $\vert b \vert = 0.3$, which achieves about $2.75 \ \mathrm{bits/s/Hz}$ with $3$ training symbols, which attains the best Hassibi-Hochwald bound (best in a sense of obtaining the maximum of that bound over $T_\tau$, namely the peak of the bound in Fig.~\ref{fig:cap_train}) of no \ac{DT}. 
One the bias reduces even more, the system goes beyond the no \ac{DT} bound, and starts to become closer to the genie-aided perfect \ac{CSI} bound. For instance, the unbiased \ac{DT} attains $3 \ \mathrm{bits/s/Hz}$ with only $1$ training symbol.\\ 

However, to understand the joint impact of training length and the biased introduced by the \ac{DT}, we plot a heatmap to provide a bird's-eye view of system capacity, as a function of training length $T_\tau$ and \ac{DT} bias in Fig.~\ref{fig:phase_C}.
The red "ridge" (obtained by solving \eqref{eq:optb}) representing optimal training bends towards the right as bias increases, which shows that a \ac{DT}-aided system must actively sacrifice more data symbols for physical pilots just to combat the \ac{DT}'s misleading prior.
The solid white contour line represents a cliff in performance, i.e. the threshold where the rate drops so low that the \ac{DT}-aided system would be better off ignoring the \ac{DT} entirely.\\

In Fig.~\ref{fig:fig12}, we fix as before $T=30$ symbols and $\sDT = 0.05$, where we plot the \ac{DT} rate gain $\Delta C(\rho)$ of \eqref{eq:advantage}, namely the difference between the optimally trained \ac{DT}-aided rate and the optimally trained pilot-only rate, as a function of the operating \ac{SNR} $\rho$ for several bias levels. The zero line represents the case whereby the \ac{DT} brings no benefit, i.e. the \ac{DT}-aided system attains exactly the pilot-only rate. For an unbiased \ac{DT}, we see that $\Delta C$ is positive throughout but decays monotonically towards zero as $\rho$ grows large, which indicates that at high \ac{SNR} the pilots become inexpensive enough that near-perfect \ac{CSI} is acquired by training alone, leaving little for the \ac{DT} to contribute. At the other extreme, as $\rho\to 0$, both rates vanish and so does $\Delta C$, so that the advantage is maximized at a finite \ac{SNR} of about $\rho \simeq 2.5$, indicating that a \ac{DT} is most valuable where training is expensive at that \ac{SNR}. 
On the other hand, when the \ac{DT} is biased, e.g.\ $|b|=0.35$, the advantage still rises at low \ac{SNR} but then crosses below zero over a mid-\ac{SNR} band, where the bias floor of \eqref{eq:se2b} dominates the estimation error before the pilots become cheap, which indicates that trusting the \ac{DT} is worse than ignoring it. As the bias grows further, e.g.\ $|b|=0.6$, this harmful band widens. For very large $\rho$, all curves converge towards zero, since sufficient training "washes out" the bias.\\

\section{Conclusion}\label{sec:concl}
In this paper, we addressed the question of how much training is needed with a \ac{DT} by modeling the DT as a complementary Gaussian measurement of the wireless channel, whose bias and variance follow from a Lindeberg-Feller central limit theorem over the DT's errors and are governed by the electromagnetic materials assigned to the environment. Moreover, using the \ac{DT} measurement along with physical world pilots via the best linear unbiased estimator, we derived a DT-aided Cram\'er-Rao bound, a pilot-equivalence law that converts DT fidelity into an equivalent number of pilots, and a bias mismatch threshold beyond which the \ac{DT} becomes destructive and unlearns the physical world. Casting the analysis as a block-fading achievable rate, we obtained the optimal training length in closed form and characterized when pilot training can be dispensed with altogether.

These results quantify the common claim that a more accurate \ac{DT} requires fewer pilots. We have shown that an incorrect material assumption within the \ac{DT} inflates its variance $\sDT$ by a factor of about $3.8$ through a ray tracing example, whereas the same error turns into a frequency-dependent bias once the path phases stay correlated across the deployment. Moreover, the number of pilots a \ac{DT} replaces decays inversely with the operating \ac{SNR}, hence the value of a \ac{DT}, in terms of achievable rate, is largest at a moderate \ac{SNR} and vanishes in both the low and high \ac{SNR} limits. Future directions include extending the framework to \acp{DT} whose bias is partially known to the estimator, to massive \ac{MIMO} and wideband millimeter wave channels, and to online \ac{DT} calibration that keeps the fidelity, and hence the pilot savings, high as the environment evolves. Future work can also include \ac{ISAC} tasks where the \ac{DT} can aid with.

\bibliographystyle{IEEEtran}
\bibliography{ref}

\end{document}